\documentclass[11pt]{article}

\usepackage[margin=1in]{geometry}
\usepackage[T1]{fontenc}
\usepackage{lmodern}
\usepackage{amsmath,amssymb,amsthm,mathtools}
\usepackage{algorithm}
\usepackage{algpseudocode}
\usepackage{thmtools}
\usepackage{booktabs}
\usepackage{enumitem}
\usepackage{xcolor}
\usepackage{tikz}
\usetikzlibrary{arrows.meta,calc,positioning}
\usepackage{hyperref}
\usepackage{microtype}
\usepackage{verbatim}

\hypersetup{
  colorlinks=true,
  linkcolor=blue!50!black,
  citecolor=blue!50!black,
  urlcolor=blue!50!black
}

\newtheorem{theorem}{Theorem}[section]
\newtheorem{lemma}[theorem]{Lemma}

\newcommand{\OPT}{\operatorname{OPT}}
\newcommand{\LCS}{\operatorname{LCS}}
\newcommand{\cost}{\operatorname{cost}}
\newcommand{\rad}{\operatorname{rad}}
\newcommand{\val}{\operatorname{val}}

\newcommand{\parity}{\mathrm{Parity}}
\newcommand{\var}{\mathrm{var}}
\newcommand{\pol}{\mathrm{pol}}

\definecolor{greenlantern}{rgb}{0.0, 0.5, 0.0}

\definecolor{red}{rgb}{1.0, 0.0, 0.0}

\title{Hardness of Approximation of Rank Aggregation on Ulam Metric}

\author{%
Sk Ruhul Azgor%
  \thanks{Pennsylvania State University.
    Email: \texttt{sfa6135@psu.edu}
  }
\and
Diptarka Chakraborty%
\thanks{National University of Singapore.
  Work partially supported by an MoE AcRF Tier 1 grant (T1 251RES2303) and a Google South \& South-East Asia Research Award.
    Email: \texttt{diptarka@nus.edu.sg}
  }
\and
Le Van Cuong%
\thanks{National University of Singapore.
    Email: \texttt{e1583414@u.nus.edu}
  }
\and
Debarati Das%
\thanks{Pennsylvania State University.
Work supported in part by NSF grant 2337832.
        Email: \texttt{debaratix710@gmail.com}
}
\and
Tien Long Nguyen%
  \thanks{Pennsylvania State University.
    Email: \texttt{tfn5179@psu.edu}
  }
}

\date{}

\begin{document}
\maketitle

\begin{abstract}
    We study the approximability of rank aggregation under the Ulam metric. In the \emph{Ulam median} problem, the goal is to find a permutation minimizing the sum of its Ulam distances to the input permutations, while in the \emph{Ulam center} problem the objective is to minimize the maximum such distance. Both problems are known to be NP-hard, but no explicit approximation hardness was previously known. We prove that, for every $\varepsilon>0$, it is NP-hard to approximate either Ulam median or Ulam center within a factor of $51/50-\varepsilon$, even when the input consists of only four permutations. We further show that unless P = NP, neither problem admits a polynomial-time additive approximation scheme. The hardness result for Ulam median is established via a reduction from MAX-E3-LIN-2. The corresponding hardness for Ulam center is then obtained through a reduction from Ulam median.
\end{abstract}

\section{Introduction}\label{sec:introduction}

Aggregating potentially inconsistent information from multiple sources is a fundamental challenge across social choice, information retrieval, and data analysis. When sources such as voters, search engines, or learned agents provide complete orderings over a set of alternatives, the problem is known as \emph{rank aggregation}. The goal is to compute a consensus ranking that optimizes a specific objective function. Because different individuals or retrieval systems rarely agree on every pair of alternatives, input rankings inherently conflict. This dynamic naturally arises in diverse applications, including voting, web search, database systems, similarity search, and classification~\cite{brandt2016handbook, DKNS01, Harman92a, FKS03, ACN08}.

Two of the most well-studied optimization objectives in this domain are the \emph{median} and \emph{center} variants. The median variant seeks a consensus ranking that minimizes the sum of distances to the input rankings, while the center variant minimizes the maximum distance to any input. In both cases, the optimal consensus may be any valid permutation, not strictly one of the original inputs. Ultimately, the computational complexity of solving either variant depends heavily on the underlying distance metric used to compare permutations.

In this paper, we study the rank aggregation problem under the \emph{Ulam distance}. The Ulam distance between two rankings is the minimum number of items/candidates that need to be moved to transform one ranking into the other, where a move removes an item and reinserts it elsewhere.  Equivalently, for permutations
of $n$ items, it equals $n$ minus the length of their \emph{longest common
subsequence} (LCS).  This distance metric is compelling primarily for two reasons.  First, the Ulam distance is exactly one-half of the insertion-deletion distance, a natural variant of the standard edit distance, when both strings are permutations of the same alphabet. Thus, it captures much of the alignment structure that makes edit-distance problems challenging, while avoiding the complications caused by repeated symbols. This connection suggests that insights for Ulam aggregation may also be useful for more general multiple sequence alignment problems, which have numerous applications in computational biology~\cite{gusfield1997, pevzner2000computational}, DNA storage systems~\cite {GBCDLSB13, RMRAJY17}, speech recognition~\cite{kohonen1985median}, and classification~\cite{martinez2000use}. Second, a move operation naturally models a displaced item while preserving the relative order of a large unaffected subsequence, a common form of error in rankings. Accordingly, the Ulam distance arises naturally in applications involving rankings, including social choice theory~\cite{brandt2016handbook} and information retrieval~\cite{Harman92a}, and has been extensively studied in the algorithmic literature in various contexts~\cite{CMS01, CK06, AN10, NSS17, BS19}.

Both the median and center variants of the rank aggregation problem under the Ulam distance are already known to be NP-hard~\cite{fischer2025}. Recently, through a reduction from 3SAT, the Ulam median problem was shown to be NP-hard even for four input permutations~\cite{Habib26}. On the algorithmic side, returning the best input gives a folklore $2$-approximation to the median by the triangle inequality. For the center, returning any input gives the same guarantee. Beating this generic bound for the Ulam median long remained a computational challenge, and was first achieved in~\cite{chakraborty2021approximating}, which proposed a $(2-\delta)$-approximation for a tiny constant $\delta>0$. Since then, several new algorithmic frameworks~\cite{CDK23, CDN26}, along with refinements~\cite{jaiswal2025robust}, have been proposed. Despite these attempts, the best-known approximation factor is $1.968$~\cite{CDN26}. On the center variant, the scenario is much more elusive, with no better-than-$2$-factor algorithm known, and a slightly better than $3/2$-approximation known only when the number of inputs is constant~\cite{CGJ21}. This behavior contrasts sharply with rank aggregation under the \emph{Kendall-tau distance} -- another fundamental distance metric defined over permutations. Although the corresponding median problem is NP-hard~\cite{DKNS01}, even for three input rankings~\cite{peters2026kemeny, madarasi2026complexity}, it admits a polynomial-time approximation scheme (PTAS)~\cite{kenyon2007rank}. 

It is therefore natural to ask whether Ulam median also admits a PTAS, or even just a polynomial-time additive approximation scheme. We show that neither is possible unless P=NP.

\subsection{Our results}

We begin by proving an explicit multiplicative inapproximability bound for the median objective. In particular, the hardness holds even when the instance consists of only four input permutations, thereby ruling out a PTAS in this highly restricted setting.

\begin{restatable}{theorem}{FourInputHardnessTheorem}
\label{thm:four-input-hardness}
Unless P=NP, for any $\varepsilon > 0$, there is no polynomial-time algorithm that approximates the Ulam median within a factor of $51/50 - \varepsilon$, even when the input consists of four permutations.
\end{restatable}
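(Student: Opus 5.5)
We will prove Theorem~\ref{thm:four-input-hardness} by a gap-preserving reduction from MAX-E3-LIN-2, using Håstad's theorem: for every $\delta>0$ it is NP-hard, given a system of $m$ equations $x_i\oplus x_j\oplus x_k=b$ over $n$ variables (each equation on exactly three variables), to decide whether at least $(1-\delta)m$ equations can be satisfied or at most $(1/2+\delta)m$ can. We may first make the instance regular: each variable occurs a bounded and fixed number of times, and the two literal polarities are balanced. This keeps the sizes of all gadgets proportional to $m$. From such an instance we will build four permutations $\pi_1,\dots,\pi_4$ over a common alphabet of size $N=\Theta(m)$, together with an explicit affine function $c(t)=\alpha m-\beta t$, where $t$ is the maximum number of simultaneously satisfiable equations. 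The goal is to show that the optimal median cost is exactly (or up to $o(m)$) $c(t)$. The ratio $c(m/2)/c(m)$ should then tend to $51/50$. This means we will tune the gadget sizes so that $\alpha-\beta/2 : \alpha-\beta = 51:50$.

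\textbf{Construction.} Since $\sum_i d(\pi_i,\sigma)=4N-\sum_i|\LCS(\pi_i,\sigma)|$, minimizing cost means maximizing total LCS with the four inputs. The alphabet will consist of variable gadgets followed by equation gadgets.
\begin{itemize}[leftmargin=*]
\item \emph{Variable gadgets.} Each variable $x$ gets a block of symbols appearing in one of two orders: one favored by $\pi_1,\pi_2$ and the other by $\pi_3,\pi_4$. Any median must essentially commit to one of the two orders, which encodes $x\in\{0,1\}$. Mixing the two orders should cost strictly more, and padding with long common ``anchor'' blocks will enforce this.
\item \emph{Equation gadgets.} Each equation gets a small block whose order in the four permutations depends on the parity pattern. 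Of the eight assignments to its three variables, the four satisfying ones allow $\sigma$ to gain one extra unit of LCS in some input. The four inputs are needed exactly to express the parity constraint: each input ``rewards'' one satisfying pattern. The copies of a variable inside equation gadgets will be tied to that variable's block by long anchor sequences, so that consistency across occurrences is forced.
\end{itemize}

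\textbf{Proof steps, in order.}
\begin{enumerate}[leftmargin=*]
\item \emph{Completeness.} From an assignment satisfying $t$ equations, write down $\sigma$ explicitly and compute its cost as $c(t)$.
\item \emph{Soundness.} Show that any permutation $\sigma$ can be transformed, without increasing its cost, into a \emph{canonical} $\sigma'$. In $\sigma'$ the anchors appear in the common order, each variable block is in one of its two pure orders, and the equation gadgets are consistent with those choices.
\item \emph{Decoding.} Read an assignment off the canonical $\sigma'$ and verify that its cost is at least $c(t)$, where $t$ is the number of equations the assignment satisfies.
\end{enumerate}

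\textbf{Main obstacle.} We expect the canonicalization in step~2 to be the hard part. LCS alignments with four different permutations can interact globally, so an optimal $\sigma$ might partially interleave gadgets or use inconsistent values for a variable in different equations. We plan a local exchange argument. First, anchors of length $L\gg$ gadget size ensure that any $\sigma$ which breaks an anchor's order in some input loses more than it could possibly gain. This forces each alignment to respect gadget boundaries, so the total LCS decomposes as a sum over gadgets. Second, we analyze each gadget by finite case analysis on the small number of relative orders of its symbols. Inconsistency is handled by a majority or averaging argument: re-setting a variable to its better value on all occurrences costs at most a constant per occurrence, and regularity lets us charge this cost away.

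If the anchors are long, they inflate the base cost and push the ratio toward $1$. To avoid this, we will instead make anchors consist of symbols in identical positions in all inputs, which cost nothing. Concretely, choose the gadget weights so that $\alpha/\beta$ gives exactly $51/50$, and let $\delta\to0$ to absorb the $\varepsilon$ in the statement.
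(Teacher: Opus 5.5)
Your proposal is a template for a gap-preserving reduction rather than a proof. The theorem's content is a concrete four-permutation construction whose optimum is an exact affine function of $\OPT_{\mathrm{LIN}}$ with intercept-to-slope ratio $26$, and you supply none of the ingredients that produce this. ``Choose the gadget weights so that $\alpha/\beta$ gives $51/50$'' is not an available move. $\alpha$ is dictated by how many symbol slots per equation the gadgets use, and $\beta$ by how much a satisfied equation can gain; if both were freely tunable you could reach any ratio. In the paper, $51/50=25.5/25$ comes from $12$ symbols per equation in each of four inputs ($48$ slots), a baseline score of $22$, and one extra unit per satisfied equation, so $\OPT_{\mathrm{med}}(\Pi_{\mathcal E})=26q-\OPT_{\mathrm{LIN}}(\mathcal E)$. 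Your equation gadget (``each input rewards one satisfying pattern'') is never specified, and it is unclear it exists: LCS scores of individual local configurations are not determined by parity (in the paper $h(\{a\})=2$ but $h(\{a,b,c\})=5$). The paper instead uses only two inputs, $P_1=abc$ and $P_2=bca$. It observes that $h$ summed over a whole parity class is $11$ versus $10$. It then places four copies indexed by the coset $R_j$, so that every local assignment $\mathbf f_j$ is translated onto an entire parity class.

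The structural steps also do not work as sketched. If long anchors made the total LCS ``decompose as a sum over gadgets'' with disjoint symbol sets, each gadget could be optimized separately and nothing would link a variable to its occurrences. Hardness must come from a coupling, and ``tied by long anchor sequences'' does not describe one. In the paper the coupling is a single anchor $Z$ placed at the end of $\pi_1,\pi_2$ and at the start of $\sigma_1,\sigma_2$. Every median can then be canonicalized to $\tau_LZ\tau_R$, and the optimum becomes $4M+\max_{(S_L,S_R)}\bigl(u(S_L)+|S_R|+v(S_R)\bigr)$: the gadgets decompose only once each non-anchor symbol has chosen a side. Your worry that long anchors inflate the cost is misplaced, since anchor symbols in the same relative order in all inputs are matched in full and contribute nothing to the cost, whatever their length. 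Likewise, handling inconsistent medians by paying ``at most a constant per occurrence'' and charging it away would alter the affine relation and hence the constant. What is needed, and what the paper proves, is zero loss under majority decoding. The bound $h(T)-h(S)\le 2|T\setminus S|-|S\setminus T|$ gives $u(S_L)\le u(S^F)+2D-U$. At the same time, the reversed-polarity blocks $B_x^0B_x^1$ versus $B_x^1B_x^0$ make $|S_R|+v(S_R)$ drop by exactly $2D-U$ relative to $S_R^F$. Finally, your regularization step is unnecessary, because polarity balance comes for free: each coordinate of $R_j$ takes each value twice. It would also need care, since H{\aa}stad's $(1-\delta,1/2+\delta)$ gap does not survive a fixed occurrence bound (bounded-degree instances can be satisfied noticeably above $1/2$), so the bound would have to grow as $\delta\to 0$.
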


Furthermore, the reduction used to prove the preceding result produces a gap linear in the total input length \(mn\). Consequently, we can rule out a polynomial-time additive approximation scheme even for instances consisting of only four input permutations.

\begin{restatable}{theorem}{MedianAdditiveHardness}
\label{thm:median-additive-hardness}
Unless P=NP, for any $0 < \varepsilon \leq 1/300$, there is no polynomial-time algorithm that computes an $\varepsilon mn$-additive approximation of Ulam median with $m$ input permutations over an alphabet of size $n$, even when $m = 4$.
\end{restatable}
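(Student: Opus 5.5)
Theorem~\ref{thm:median-additive-hardness} follows from the same gap reduction that yields Theorem~\ref{thm:four-input-hardness}. That reduction goes from MAX-E3-LIN-2 via H\aa stad's theorem: for every $\delta>0$ it is NP-hard to decide, for a system of $M$ equations, whether at least $(1-\delta)M$ are satisfiable or at most $(1/2+\delta)M$ are. The plan is to make that reduction quantitative and then check that its gap is linear in $mn$ with an explicit constant.

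\textbf{Step 1: quantitative form of the reduction.} Write the reduction as a polynomial-time map from a system with $M$ equations to four permutations over an alphabet of size $n$, with $n \le aM$ for an explicit constant $a$, and with constants $b$ and $c$ satisfying
\[
  \OPT \le bM + c\,(M-\mathrm{sat}) \quad\text{(completeness)}, \qquad
  \OPT \ge bM + c\,(M-\mathrm{sat}) \quad\text{(soundness)},
\]
where $\mathrm{sat}$ is the maximum number of satisfiable equations. The completeness side should come from building a median permutation out of an assignment. The soundness side should come from decoding any permutation into an assignment without increasing cost by more than the claimed amount.

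The factor $51/50$ then pins down the constants. It should equal $(b+c/2)/b$ evaluated in the limit $\delta\to 0$, which gives $c/(2b)=1/50$.

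\textbf{Step 2: the additive gap.} The YES and NO instances have optima separated by $c(1/2-2\delta)M$. Since $m=4$ and $n\le aM$, this gap is at least $\frac{c(1/2-2\delta)}{4a}\,mn$.

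Suppose an algorithm returns a value within $\varepsilon mn$ of $\OPT$, with $2\varepsilon mn$ strictly smaller than the gap. Thresholding its output at the midpoint then separates YES from NO instances in polynomial time, so P$=$NP. To get the range $\varepsilon \le 1/300$, I need $\frac{c}{8a} > \frac{2}{300}$ after taking $\delta$ small. Here I would plug in the explicit gadget sizes, possibly after padding the alphabet so that $n$ is exactly $aM$.

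\textbf{Main obstacle.} The delicate step is the constant accounting: bounding $n$ tightly relative to $M$, including the variable gadgets. Variable gadgets can contribute a number of symbols proportional to the number of variables, so I would first reduce to instances where every variable occurs a bounded number of times, so that the number of variables is $O(M)$.

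If the direct constants come out too weak for $1/300$, I would instead express the gap relative to $\OPT$ rather than to $n$. The first fallback is the bound $\OPT \le$ (the cost of one input) $\le 4n$, but that bound runs in the wrong direction for this purpose. The proper fix is to bound $n$ from above by the gadget count, and I expect the paper's bookkeeping to give exactly that bound with room to spare for $1/300$.
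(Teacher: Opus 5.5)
\textbf{There is a genuine gap: your midpoint test loses a factor of $2$ that the actual constants cannot absorb.} The gap sits in the constant accounting you deferred. The reduction behind Theorem~\ref{thm:four-input-hardness} gives the exact relation $\OPT_{\mathrm{med}}(\Pi_{\mathcal E})=26q-\OPT_{\mathrm{LIN}}(\mathcal E)$ (Lemma~\ref{lem:affine_opt}). Here $m=4$ and $n=36q+1$: $12q$ gadget symbols plus an anchor of $24q+1$ symbols. In your notation this means $b=25$, $c=1$ and $a\approx 36$. Your inference $c/(2b)=1/50$ is consistent with this, but it is $a$ that matters.

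Your criterion $c/(8a)>2/300$ then reads $1/288>1/150$, which is false. Concretely:
\begin{itemize}
\item the YES/NO gap is $(1/2-2\delta)q$;
\item at $\varepsilon=1/300$ the additive error $\varepsilon mn=4\varepsilon(36q+1)$ is already about $0.48q$;
\item so $2\varepsilon mn$ is almost twice the gap.
\end{itemize}
Thresholding at the midpoint therefore only handles $\varepsilon<1/576$. There is no ``room to spare''. Padding the alphabet cannot help either, since it enlarges $\varepsilon mn$ without enlarging the gap.

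\textbf{The missing idea is that the error is one-sided.} The algorithm outputs a permutation $\rho$, so $\cost_{\Pi_{\mathcal E}}(\rho)\ge\OPT_{\mathrm{med}}(\Pi_{\mathcal E})$ holds automatically. You should therefore place the threshold at the NO-side bound, not at the midpoint. Take $\delta<\varepsilon\le 1/300$ and $q\ge 1$.
\begin{itemize}
\item A YES instance gives $\cost_{\Pi_{\mathcal E}}(\rho)\le(25+\delta)q+4\varepsilon(36q+1)<25q+149\varepsilon q\le(51/2-\varepsilon)q$.
\item A NO instance gives $\cost_{\Pi_{\mathcal E}}(\rho)\ge(51/2-\delta)q>(51/2-\varepsilon)q$.
\end{itemize}
This is exactly the paper's argument. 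The condition $150\varepsilon\le 1/2$ is where $1/300$ comes from.

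Separately, your bounded-occurrence preprocessing is unnecessary. Gadget symbols are indexed by equation occurrences, so $n$ does not depend on the number of variables.
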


It is worth emphasizing that four inputs are necessary for such an NP-hardness result: Ulam median can be solved exactly in polynomial time for three input permutations~\cite{chakraborty2021approximating}. Thus, our result establishes a sharp transition from exact tractability with three inputs to constant-factor inapproximability with four. This result is particularly striking compared to the general median-string problem under edit distance, which can be solved in polynomial time for any fixed number of input strings \cite{Sankoff75}. Together with the NP-hardness for four-input Ulam-median established in~\cite{Habib26}, our result highlights the computational significance of the permutation constraint: requiring the median itself to be a permutation can make the problem substantially harder, even when the number of inputs is a fixed constant.

We further show that analogous hardness holds for the Ulam center objective. In particular, we rule out a PTAS even for four input permutations.

\begin{restatable}{theorem}{CenterHardnessTheorem}
\label{thm:center-hardness}
Unless P=NP, for any $\varepsilon > 0$, there is no polynomial-time algorithm that approximates the Ulam center within a factor of $51/50 - \varepsilon$, even when the input consists of four permutations.
\end{restatable}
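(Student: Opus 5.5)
The plan is to reduce from the four-input Ulam median instances produced in the proof of Theorem~\ref{thm:four-input-hardness}, as the abstract indicates. The reduction for that theorem (from MAX-E3-LIN-2) produces four permutations $\pi_1,\dots,\pi_4$ over $[n]$. The argument will rely on the following properties, which I expect that construction to have and would verify directly:
\begin{itemize}
\item[(i)] \emph{Completeness:} if a $(1-\delta)$ fraction of equations is satisfiable, there is a permutation $\sigma$ with total cost $\sum_i d(\sigma,\pi_i)\le c$.
\item[(ii)] \emph{Soundness:} if at most a $(1/2+\delta)$ fraction is satisfiable, every $\sigma$ has total cost at least $s$, with $s/c\ge 51/50-O(\delta)$.
\end{itemize}
The key extra structural fact I would extract is \textbf{balance}: in the completeness case, the optimal $\sigma$ can be chosen so that its distances to the four inputs are all equal, each being $c/4$. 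This should follow from the gadget design. Each variable or equation gadget is typically built so that the two intended local orderings disagree with the inputs in a symmetric way. If balance fails, I would enforce it by symmetrization.

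\textbf{Symmetrization.} Build the center instance from several relabeled copies of the median gadget placed on disjoint alphabet blocks and concatenated. In block $j$, the roles of $\pi_1,\dots,\pi_4$ are cyclically permuted by $j$.
\begin{itemize}
\item For concatenations over disjoint alphabets, the LCS, and hence the Ulam distance, is additive over blocks, provided the candidate is first normalized to respect the block order. Normalization cannot increase any distance, since moving all symbols of each block into place is dominated by the per-block costs.
\item Under the cyclic shift, each new input $\Pi_i$ has distance $\sum_{j} d(\sigma_j,\pi_{i+j})$. In the completeness case this equals exactly $c$ for every $i$.
\end{itemize}

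\textbf{Soundness for the center.} For any candidate $\Sigma$, the maximum distance is at least the average $\frac14\sum_i d(\Sigma,\Pi_i)$. This average is a sum of four median costs, one per block, hence at least $4s/4=s$. So center completeness is $c$ and center soundness is $s$, with the same ratio $51/50-O(\delta)$. The instance still has four inputs and polynomial size. Choosing $\delta$ small relative to $\varepsilon$ and invoking the MAX-E3-LIN-2 hardness used for Theorem~\ref{thm:four-input-hardness} gives the claim.

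\textbf{Main obstacle.} The hardest step is rigorously justifying the block-normalization, i.e., that an arbitrary candidate $\Sigma$ can be converted into one respecting the block structure without increasing its distance to any $\Pi_i$. I would first try the following argument:
\begin{itemize}
\item All $\Pi_i$ share the same block order, so each common subsequence of $\Sigma$ and $\Pi_i$ uses symbols in block order.
\item Restrict $\Sigma$ to each block and concatenate the restrictions in block order to get $\Sigma'$.
\item Every common subsequence of $\Sigma$ and $\Pi_i$ remains a subsequence of $\Sigma'$, since sorting symbols by block preserves the relative order within each block.
\end{itemize}
This yields $\mathrm{LCS}(\Sigma',\Pi_i)\ge\mathrm{LCS}(\Sigma,\Pi_i)$ simultaneously for all $i$, and then additivity completes the argument.
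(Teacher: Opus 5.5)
Your proposal is correct and is essentially the paper's proof. Your symmetrization is exactly the cyclic block construction of Lemma~\ref{lem:median-to-center}, with the same stable-sort normalization, blockwise additivity, and averaging bound; the only cosmetic difference is that you carry the completeness/soundness thresholds $c,s$ through the reduction rather than proving $\OPT_{\mathrm{ctr}}(\mathcal C(\Pi))=\OPT_{\mathrm{med}}(\Pi)$ with solution extraction. The preliminary balance property is unnecessary, since placing the same $\sigma$ in every block already makes every distance equal to $\cost_\Pi(\sigma)$, and it in fact fails in general (e.g.\ by integrality whenever $4\nmid 26q-\OPT_{\mathrm{LIN}}(\mathcal E)$).
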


We further rule out a polynomial-time additive approximation scheme for the Ulam center under the same four-input restriction.

\begin{restatable}{theorem}{CenterAdditiveHardness}
\label{thm:center-additive-hardness}
Unless P=NP, for any $0 < \varepsilon \leq 1/300$, there is no polynomial-time algorithm that computes an $\varepsilon n$-additive approximation of the Ulam center with $m$ input permutations over an alphabet of size $n$, even when $m = 4$.
\end{restatable}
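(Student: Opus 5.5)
We derive Theorem~\ref{thm:center-additive-hardness} from the median-side gap by the median-to-center reduction mentioned in the abstract. We do not use the conclusion of Theorem~\ref{thm:median-additive-hardness}; we use the gap instance behind it. The reduction from MAX-E3-LIN-2 behind Theorems~\ref{thm:four-input-hardness} and~\ref{thm:median-additive-hardness} produces four permutations $\pi_1,\dots,\pi_4$ over an alphabet of size $N$ with two thresholds. In the YES case the median cost is at most some $T$. In the NO case it is at least $T+\gamma N$, where $\gamma$ is a fixed constant; the admissible range $\varepsilon\le 1/300$ suggests $\gamma$ on the order of $4\cdot 2/300$ after normalisation by $mn=4N$.

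\textbf{Balancing gadget.} We take four disjoint copies of the alphabet, $\Sigma^{(1)},\dots,\Sigma^{(4)}$, and build four new permutations over the union, of size $n=4N$. In the $j$-th output permutation, block $i$ is ordered as $\pi_{j+i \bmod 4}$ restricted to $\Sigma^{(i)}$. The blocks are concatenated in a fixed order common to all four outputs. Each output thus sees every original input exactly once, via a cyclic Latin-square assignment.

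\textbf{Completeness.} Let $\sigma$ be an optimal median for the $\pi$'s. Place the copy $\sigma^{(i)}$ on each block, concatenated in the common block order. The distance to every output permutation is then $\sum_i d(\sigma,\pi_{j+i})=\cost(\sigma)\le T$.

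\textbf{Soundness.} Let $\tau$ be any center. Its restriction to block $\Sigma^{(i)}$ is a permutation $\tau_i$ of that block. An LCS with an output permutation restricted to block $i$ is a common subsequence of $\tau_i$ and $\pi_{j+i}^{(i)}$, so $d(\tau,\text{out}_j)\ge\sum_i d(\tau_i,\pi_{j+i})$. Averaging over $j$ gives
\[
\max_j d(\tau,\text{out}_j)\ \ge\ \frac14\sum_i\sum_j d(\tau_i,\pi_j)\ \ge\ \frac14\cdot 4\,(T+\gamma N)=T+\gamma N .
\]
Hence the center gap is $\gamma N=(\gamma/4)\,n$, which is linear in $n$.

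\textbf{Main obstacle.} The crux is the constant. We must verify that the median gap, measured as $\gamma N$, satisfies $\gamma N/(4N)=\gamma/4$ relative to $mn$ in the median theorem. Here the median gap is $\varepsilon\cdot 4N$ with $\varepsilon\le 1/300$, so it is at least $(4/300)N$ in absolute terms, which gives a center gap of at least $(4/300)N=(1/300)n$. This matches the threshold exactly, so the scaling goes through. An algorithm with additive error $\varepsilon n<$ gap$/2$ would then decide the MAX-E3-LIN-2 gap instance. Care is needed to confirm that the median construction yields a strict gap of $2\varepsilon mn$, or a gap with strict inequality, rather than just $\varepsilon mn$. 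I would first re-derive that slack from the median construction and then transfer it unchanged.
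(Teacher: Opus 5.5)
Your cyclic block construction, completeness argument and soundness argument are correct and match the paper's Lemma~\ref{lem:median-to-center}. Bounding each block by restricting a common subsequence to that block is even slightly more direct than the paper's stable sort. The gap is in the final quantitative step, which is where the theorem's content lies, and you leave it open with a heuristic that would fail as written.

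You never compute $\gamma$. Instead you try to read it off the statement of Theorem~\ref{thm:median-additive-hardness}, but that statement asserts that no algorithm exists, not that a gap exists. The inference ``the median gap is $\varepsilon\cdot 4N$ with $\varepsilon\le 1/300$, so it is at least $(4/300)N$'' therefore runs the wrong way. Your guess $\gamma\approx 8/300$ is also off by roughly a factor of two. By Lemma~\ref{lem:affine_opt}, the YES and NO thresholds are $(25+\delta)q$ and $(51/2-\delta)q$, so the gap is $(1/2-2\delta)q$ while $N=36q+1$. That gives $\gamma\approx 1/72$, and a center gap of about $n/288$ with $n=4N=144q+4$. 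Consequently your criterion ``additive error $\varepsilon n<\text{gap}/2$'' fails: it would need $\varepsilon$ below about $1/576$. The median construction does not give a gap of $2\varepsilon mn$ at $\varepsilon=1/300$, so the ``slack'' you plan to re-derive is not there.

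The halving is unnecessary because the test is one-sided. Every returned $\tau$ satisfies $\rad(\tau)\ge \OPT_{\mathrm{ctr}}$, so in the NO case $\rad(\tau)\ge(51/2-\delta)q$, while in the YES case $\rad(\tau)\le(25+\delta)q+\varepsilon n$. It therefore suffices that $\varepsilon(144q+4)<(1/2-2\delta)q$. This holds for every $\varepsilon\le 1/300$ once $\delta$ is small, e.g.\ $\delta\le 10^{-3}$, because $144/300=0.48<1/2$.

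The paper avoids constants altogether, and your own averaging inequality already supplies what it uses. From $\frac14\sum_i \cost_\Pi(\tau_i)\le \max_j d_U(\tau,\sigma_j)$ you get $\min_i\cost_\Pi(\tau_i)\le\rad(\tau)$, which is the extraction clause of Lemma~\ref{lem:median-to-center}. Run an $\varepsilon n$-additive center algorithm and return the best block $\rho$. This gives $\cost_\Pi(\rho)\le \OPT_{\mathrm{ctr}}(\mathcal C(\Pi))+\varepsilon n=\OPT_{\mathrm{med}}(\Pi)+\varepsilon\cdot 4N$. Since $n=mN$ with $m=4$, this is an $\varepsilon mN$-additive median algorithm with the \emph{same} $\varepsilon$. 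That contradicts Theorem~\ref{thm:median-additive-hardness} used as a black box, with no gap to re-derive and no threshold to match.
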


\subsection{Technical overview}
\paragraph{Approximation hardness of Ulam Median.}
Our reduction builds on the reduction introduced to show NP-hardness of the Ulam median problem for four permutations by Habib~\cite{Habib26}: two permutations test local constraints, two enforce consistency among different occurrences of each variable, and a long common anchor block separates these two roles. The anchor block allows every candidate median of the constructed Ulam median instance to be represented by a partition of the non-anchor symbols. We analyze a candidate through its total LCS score, namely, the sum of its LCS lengths with the four input permutations.

In Habib's reduction from 3SAT \cite{Habib26}, soundness is established through an equality-case argument. The clause-witness and variable-consistency gadgets satisfy separate upper bounds, and a median attaining the target score must make both bounds tight. Tightness yields a witness for every clause and a consistent truth value for every variable, which together define a satisfying assignment. This argument, however, does not by itself yield approximation hardness. A near-optimal median need not nearly saturate the two bounds separately: it may make inconsistent choices across different occurrences of a variable and trade a loss in the consistency gadget for a gain in the constraint gadget. An approximation-hardness proof must therefore control the score of every induced partition, including slightly inconsistent partitions, and relate that score quantitatively to the value of a single global assignment. 

Our reduction is from MAX-E3-LIN-2, where we are given a set of parity equations, each over three Boolean variables, and the goal is to maximize the number of satisfied equations. H{\aa}stad~\cite{Hastad01} showed that this problem is NP-hard to approximate within any factor strictly better than $2$. Observe that we can view an assignment as a partition of all variable occurrences into true and false, subject to the requirement that all occurrences of the same variable receive the same value. Within an equation, the true occurrences form a subset of its three variable occurrences, and satisfaction depends only on the parity of this subset. Specifically, an equation is satisfied if an odd or even number of its variables are set to true, depending on the target parity. We encode this behavior in the total LCS score of the four input permutations.

The key ingredient is a parity-sensitive gadget based on two orders of three generic symbols, namely, $P_1 := abc$ and $P_2 := cab$. For any subset $T\subseteq \{a, b, c\}$, its LCS score is the maximum sum of LCS lengths of $P_1$ and $P_2$ to their median, all restricted to $T$. While the score of an individual subset $T$ is not determined solely by its parity, we show that the total score of all subsets with the same parity depends only on this parity, and that there is a fixed unit gap in the scores of the two parity classes. The equation gadget then associates a satisfying assignment with the entire odd-parity class and a violating assignment with the entire even-parity class. Hence, each equation contributes a fixed baseline together with one additional unit exactly when it is satisfied. 

For an arbitrary symbol partition, we decode a global assignment by taking the majority value for each variable on the consistency side. We then compare the LCS score of the given partition with the score of the induced assignment, and show an exact affine relation between the optimum median cost and the maximum number of satisfiable equations. This is precisely the quantitative control over arbitrary partitions required beyond Habib's equality-case argument. Further, it lets us leverage the gap obtained by H{\aa}stad~\cite{Hastad01} to derive hardness-of-approximation results for Ulam median. We present a detailed proof in Section~\ref{sec:median-hardness}.

\paragraph{Median hardness to center hardness.} To transfer the hardness results to Ulam center, we use a cyclic block construction to reduce Ulam median to Ulam center. Given an Ulam median instance with $m$ input permutations, we create $m$ disjoint copies of its alphabet and construct $m$ center inputs. All inputs use the same order of the alphabet copies, while the original permutations are arranged cyclically across them. Thus, every center input contains one copy of each original permutation, and every original permutation appears once in each block across the center instance.

Placing the same median candidate in every block makes its distance to each center input equal to its total distance in the original median instance. Conversely, the common block order lets us rearrange any center candidate blockwise without increasing its radius. Each block then defines a candidate median for the original instance, and at least one of them has median cost no larger than the center radius. The reduction therefore preserves the optimum value exactly and allows a median solution to be extracted from any center solution without incurring any additional loss (see Lemma~\ref{lem:median-to-center}). Consequently, both hardness results for Ulam median transfer to Ulam center. We provide a complete proof in Section~\ref{sec:center-hardness}.

\section{Preliminaries}\label{sec:preliminaries}
\noindent \textbf{Notations. } For $m\in\mathbb N$, we use $[m]$ to denote the set $\{1,\ldots,m\}$. We use the notation $\sqcup$ to denote the union of two disjoint sets. For any set $X$ partitioned into $t$ disjoint subsets $X_1,X_2,\ldots,X_t$, we thus write $X = X_1 \sqcup X_ 2 \sqcup \ldots \sqcup X_t$.

\paragraph{Permutations and Ulam distance.} For an alphabet $\Sigma$, let $\mathcal P(\Sigma)$ denote the set of all permutations of $\Sigma$. For a permutation $\pi\in\mathcal P(\Sigma)$ and a subset $S\subseteq\Sigma$, let $\pi|_S$ denote the restriction of $\pi$ to the symbols in $S$, with their relative order preserved. For two sequences $x$ and $y$, let $\LCS(x,y)$ denote the length of a \emph{longest common subsequence}. For any two $x,y\in\mathcal P(\Sigma)$, their \emph{Ulam distance} is
\[
    d_U(x,y):=|\Sigma|-\LCS(x,y).
\]

We will repeatedly use the following elementary identity, which follows from the triangle inequality for the Ulam distance.

\begin{lemma}\label{lem:two-permutation-score}
Let $\Sigma$ be an alphabet and let $\pi,\sigma\in\mathcal P(\Sigma)$. Then
\[
    \max_{\tau\in\mathcal P(\Sigma)}\bigl(\LCS(\tau,\pi)+\LCS(\tau,\sigma)\bigr)=|\Sigma|+\LCS(\pi,\sigma).
\]
\end{lemma}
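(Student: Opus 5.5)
We prove the identity by establishing the two inequalities separately. Throughout we write $n:=|\Sigma|$ and use $d_U(x,y)=n-\LCS(x,y)$.

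\emph{Upper bound.} Fix any $\tau\in\mathcal P(\Sigma)$. Since $d_U$ is a metric on $\mathcal P(\Sigma)$, the triangle inequality gives
\[
    d_U(\pi,\sigma)\le d_U(\tau,\pi)+d_U(\tau,\sigma).
\]
Substituting the definition of $d_U$ turns this into
\[
    n-\LCS(\pi,\sigma)\le 2n-\LCS(\tau,\pi)-\LCS(\tau,\sigma).
\]
Rearranging yields $\LCS(\tau,\pi)+\LCS(\tau,\sigma)\le n+\LCS(\pi,\sigma)$. As $\tau$ was arbitrary, the maximum on the left-hand side of the lemma is at most $n+\LCS(\pi,\sigma)$.

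If one prefers not to cite the triangle inequality, it can be checked directly. Let $A$ be the symbol set of a longest common subsequence of $\tau$ and $\pi$, and let $B$ be the analogous set for $\tau$ and $\sigma$. On $A\cap B$, the orders of $\pi$ and $\tau$ agree, and so do the orders of $\tau$ and $\sigma$. Hence $\pi$ and $\sigma$ agree on $A\cap B$, which gives
\[
    \LCS(\pi,\sigma)\ge |A\cap B|\ge |A|+|B|-n.
\]
This is the same inequality.

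\emph{Lower bound.} It suffices to exhibit one $\tau$ attaining the bound; we take $\tau:=\pi$. Then $\LCS(\tau,\pi)=n$ and $\LCS(\tau,\sigma)=\LCS(\pi,\sigma)$, so the sum equals $n+\LCS(\pi,\sigma)$.

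Combining the two bounds gives the claimed equality. There is no real obstacle here. The only point needing care is the triangle inequality for the Ulam distance, which is standard (a sequence of moves transforming $\pi$ into $\tau$ followed by one transforming $\tau$ into $\sigma$ transforms $\pi$ into $\sigma$); the direct set-intersection argument above avoids relying on it.
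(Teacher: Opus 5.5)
Your proof is correct and follows the paper's argument: the upper bound comes from the triangle inequality for $d_U$ rearranged into LCS form, and the bound is attained by $\tau=\pi$. Your supplementary set-intersection argument is also valid, and it makes the triangle inequality self-contained with respect to the paper's LCS-based definition of $d_U$, which the paper takes for granted.
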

\begin{proof}
Let $n:=|\Sigma|$.  For every $\tau\in\mathcal P(\Sigma)$, the triangle
inequality for Ulam distance gives
\[
 d_U(\tau,\pi)+d_U(\tau,\sigma)\geq d_U(\pi,\sigma).
\]
Using the definition of the Ulam distance between any two permutations $\rho,\rho'$, $d_U(\rho,\rho')=n-\LCS(\rho,\rho')$, and rearranging, we obtain
\[
 \LCS(\tau,\pi)+\LCS(\tau,\sigma)
 \leq n+\LCS(\pi,\sigma).
\]
This upper bound is attained by taking $\tau=\pi$, since
$\LCS(\pi,\pi)=n$.  The claimed equality follows.
\end{proof}

\paragraph{Ulam median.} An instance of \emph{Ulam median} is a set $\Pi=\{\pi_1,\ldots,\pi_m\}$ of permutations of a common alphabet $\Sigma$. The problem asks to find a permutation $\sigma\in\mathcal P(\Sigma)$ that minimizes the total distance to input permutations, namely, $\cost_\Pi(\sigma):=\sum_{i=1}^m d_U(\sigma,\pi_i)$. We denote the optimum value of the instance $\Pi$ by
\[
    \OPT_{\mathrm{med}}(\Pi):=\min_{\sigma\in\mathcal P(\Sigma)}\cost_\Pi(\sigma).
\]

\paragraph{Ulam center.} An instance of \emph{Ulam center} is a set $\Pi=\{\pi_1,\ldots,\pi_m\}$ of permutations of a common alphabet $\Sigma$. The problem asks to find a permutation $\sigma\in\mathcal P(\Sigma)$ that minimizes the radius $\rad_\Pi(\sigma):=\max_{i\in[m]} d_U(\sigma,\pi_i)$. We denote the optimum value of the instance $\Pi$ by
\[
    \OPT_{\mathrm{ctr}}(\Pi):=\min_{\sigma\in\mathcal P(\Sigma)}\rad_\Pi(\sigma).
\]

\paragraph{Approximation scheme.} For the Ulam median problem, an algorithm is a polynomial-time approximation scheme (PTAS) if, given any constant $\varepsilon > 0$ and an instance $\Pi$ of $m$ permutations over an alphabet of size $n$, it computes in polynomial time a solution $\rho$ satisfying $\cost_{\Pi}(\rho) \leq (1+\varepsilon)\OPT_{\mathrm{med}}(\Pi)$. An algorithm is an $\varepsilon mn$-additive approximation scheme for the Ulam median problem if it outputs a solution $\rho$ in polynomial time such that $\cost_{\Pi}(\rho) \leq \OPT_{\mathrm{med}}(\Pi) + \varepsilon mn$. The definition of PTAS for Ulam center is analogous to that of the median problem. An algorithm is an $\varepsilon n$-additive approximation scheme for the Ulam center problem if it outputs a solution $\tau$ in polynomial time such that $\rad_{\Pi}(\tau) \leq \OPT_{\mathrm{ctr}}(\Pi) + \varepsilon n$.

\paragraph{MAX-E3-LIN-2.} An instance of MAX-E3-LIN-2 consists of a variable set $\mathcal X=\{x_1,\ldots,x_k\}$ over $\mathbb F_2$ and a system $\mathcal E=\{E_1,\ldots,E_q\}$ of $q$ linear equations, each with exactly three variables. We write the $j$-th equation as $$E_j:\quad x_{j,1}\oplus x_{j,2}\oplus x_{j,3}=\lambda_j$$
where $\lambda_j\in\{0,1\}$ and $x_{j,1},x_{j,2},x_{j,3}\in\mathcal X$ denote the three \emph{variable
occurrences} in $E_j$. For an assignment $F:\mathcal X\to\mathbb F_2$, let $\val_{\mathrm{LIN}}(\mathcal E, F)$ denote the number of equations satisfied by $F$. The problem is to find an assignment that maximizes this value. We denote the optimum value of the instance $\mathcal E$ by
\[
    \OPT_{\mathrm{LIN}}(\mathcal E):=\max_{F:\mathcal X\to\mathbb F_2}\val_{\mathrm{LIN}}(\mathcal E, F).
\]
For $\mathbf u=(u_1,u_2,u_3)\in\{0,1\}^3$, let us define $\parity(\mathbf u):=u_1\oplus u_2\oplus u_3$. We will use the following hardness result of H{\aa}stad~\cite{Hastad01}.

\begin{theorem}[\cite{Hastad01}]
\label{thm:hastad_lin}
Unless $P = NP$, for any $\delta>0$, there is no polynomial-time algorithm that can distinguish whether for an instance $\mathcal E$ of MAX-E3-LIN-2, with $q$ equations,
\begin{itemize}
    \item $\OPT_{\mathrm{LIN}}(\mathcal E)\geq(1-\delta)q$, or
    \item $\OPT_{\mathrm{LIN}}(\mathcal E)\leq(1/2+\delta)q$.
\end{itemize}
\end{theorem}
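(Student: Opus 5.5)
This statement is H{\aa}stad's theorem, and in the paper it is imported rather than reproved. If I were to justify it, I would follow H{\aa}stad's original route. The plan is a gap-preserving reduction from a hard Label Cover instance to MAX-E3-LIN-2, implemented as a three-query probabilistically checkable proof. The verifier reads three bits of the proof, accepts iff their XOR equals a prescribed bit, has completeness $1-\delta$ and soundness $1/2+\delta$, and uses $O(\log N)$ randomness. Writing one equation per random string, weighted by its probability and made unweighted by duplication, then gives exactly the claimed gap with $q$ polynomial.

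\textbf{Step 1 (outer verifier).} Start from the PCP theorem in its gap-3SAT form and apply Raz's parallel repetition to the clause--variable two-prover game. This gives a bipartite Label Cover instance with projection constraints $\pi_{uv}\colon [L]\to[R]$ and perfect completeness. For any $\eta>0$ it is NP-hard to tell satisfiable instances from those in which no labeling satisfies more than an $\eta$ fraction of the constraints. The label sizes $L,R$ are constants depending only on $\eta$.

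\textbf{Step 2 (inner verifier).} Encode each vertex's label by its long code, written as a folded Boolean function $A_v$ on $\{\pm1\}^{L}$ or $B_u$ on $\{\pm1\}^{R}$, with $u$ conditioned on the constant function. The test is as follows. Pick a random edge $(u,v)$, a uniform $f\colon[R]\to\{\pm1\}$, a uniform $g\colon[L]\to\{\pm1\}$, and noise $\mu\colon[L]\to\{\pm1\}$ whose coordinates are independently $-1$ with probability $\delta$. Set $h=(f\circ\pi_{uv})\, g\,\mu$ and accept iff $B_u(f)A_v(g)A_v(h)=1$. This is a linear equation over $\mathbb F_2$ in three proof bits, and folding supplies the right-hand sides $\lambda_j$. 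Completeness follows because honest dictator encodings of a satisfying labeling fail only when the noise flips the relevant coordinate, which happens with probability $\delta$.

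\textbf{Step 3 (soundness, the main obstacle).} The key computation is Fourier-analytic. The acceptance probability equals $\tfrac12+\tfrac12\,\mathbb E_{(u,v)}\bigl[\sum_{\beta}\hat B_{\pi(\beta)}\hat A_\beta^2(1-2\delta)^{|\beta|}\bigr]$, where $\pi(\beta)$ is the set of labels $i$ with an odd number of preimages in $\beta$. If this exceeds $\tfrac12+\delta$, I would use Cauchy--Schwarz together with the noise damping $(1-2\delta)^{|\beta|}$ to cap $|\beta|$ at $O(\delta^{-1}\log\delta^{-1})$. I would then decode labels randomly: $v$ takes an element of $\beta$ with probability $\hat A_\beta^2$, and $u$ takes an element of $\alpha$ with probability $\hat B_\alpha^2$. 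This satisfies an expected $\mathrm{poly}(\delta)$ fraction of the constraints. Choosing $\eta$ smaller than this quantity contradicts Step 1's soundness.

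The delicate points are the bookkeeping with folding (so that $\hat B_\emptyset=0$ and $\beta$ is nonempty) and the bound $|\pi(\beta)|\le|\beta|$. I would mirror H{\aa}stad's calculation closely. Once these are settled, the equation count is polynomial because $L$ and $R$ are constants, and the theorem follows.
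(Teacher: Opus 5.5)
The paper gives no proof of this statement; it imports it as a black box from H{\aa}stad. Your outline is a correct sketch of exactly H{\aa}stad's proof: Label Cover from parallel repetition, folded long codes, the noisy test $B_u(f)A_v(g)A_v(h)=1$ with $h=(f\circ\pi_{uv})\,g\,\mu$, the Fourier formula, and randomized decoding via $\hat A_\beta^2$ and $\hat B_\alpha^2$.

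One detail is worth adding for the ``exactly three variables'' requirement. $A_v(g)$ and $A_v(h)$ become the same folded variable when $h=\pm g$. This happens only with probability exponentially small in $R$, so those degenerate equations can be discarded at negligible cost. Separately, the conditioning you attach to $B_u$ is unnecessary, because restricting $[L]$ to satisfying assignments already plays that role.
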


\section{Hardness of Approximation for Ulam Median}
\label{sec:median-hardness}
In this section we show that the Ulam median problem is unlikely to have a PTAS; more specifically, we show the following.
\FourInputHardnessTheorem*

We prove Theorem~\ref{thm:four-input-hardness} by a gap-preserving reduction from MAX-E3-LIN-2. The reduction has three components. The \emph{equation gadget} assigns a one-unit advantage to subsets encoding an assignment that satisfies an equation. The \emph{variable-consistency gadget} rewards choosing the same truth value across all occurrences of a variable. Finally, a long common \emph{anchor block} separates these two gadgets within an optimal median. 

\subsection{A key component: a parity-sensitive local gadget}
\label{subsec:warmup}

Before presenting the full reduction, we isolate the elementary three-symbol gadget underlying the equation gadget, and explain how it can be used to distinguish whether a parity constraint is satisfied. Consider the set $\{a,b,c\}$ of three generic symbols and two permutations
\[
P_1:=abc \qquad \text{and} \qquad P_2:=bca
\]
of this set. For a subset $T\subseteq\{a,b,c\}$, define
\[
    h(T):=\max_{\tau\in\mathcal P(T)}\bigl(\LCS(\tau,P_1|_T)+\LCS(\tau,P_2|_T)\bigr).
\]
The individual values of $h(T)$ depend on more than the cardinality or parity of $T$. Nevertheless, when summed over an entire parity class, they exhibit the one-unit gap that we will need in the reduction.

\begin{lemma}\label{lem:score-table}
The sum of $h(T)$ over the four odd-cardinality subsets of $\{a,b,c\}$ is $11$, whereas its sum over the four even-cardinality subsets is $10$.
\end{lemma}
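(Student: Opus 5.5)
The plan is to compute each $h(T)$ explicitly and add up the two parity classes. Lemma~\ref{lem:two-permutation-score} applied to the alphabet $T$ gives
\[
  h(T)=|T|+\LCS(P_1|_T,P_2|_T),
\]
so no optimization over $\tau$ is needed. Everything reduces to computing the LCS of the two restrictions of $P_1=abc$ and $P_2=bca$ for each of the eight subsets.

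For the odd-cardinality class we proceed as follows.
\begin{itemize}
\item For the three singletons, both restrictions coincide, so each has $h=1+1=2$.
\item For the full set, $\LCS(abc,bca)=2$ (witnessed by $bc$, and the two permutations differ), so $h=3+2=5$.
\item The odd total is therefore $3\cdot 2+5=11$.
\end{itemize}

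For the even-cardinality class we proceed as follows.
\begin{itemize}
\item The empty set contributes $0$.
\item For $\{a,b\}$, the restrictions are $ab$ versus $ba$, giving LCS $1$ and $h=3$.
\item For $\{b,c\}$, the restrictions are $bc$ versus $bc$, giving LCS $2$ and $h=4$.
\item For $\{a,c\}$, the restrictions are $ac$ versus $ca$, giving LCS $1$ and $h=3$.
\item The even total is therefore $0+3+4+3=10$.
\end{itemize}

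There is no real obstacle in this argument. The only care needed is in reading off the pairwise restrictions correctly, since this is exactly where the individual values differ (for instance, $\{b,c\}$ scores $4$ while the other pairs score $3$), even though the class sums come out as $11$ and $10$. I would record the eight values in a short table to make the verification transparent.
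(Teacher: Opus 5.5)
Your proposal is correct and follows essentially the same route as the paper. You reduce $h(T)$ to $|T|+\LCS(P_1|_T,P_2|_T)$ via Lemma~\ref{lem:two-permutation-score}, then tabulate the eight values ($0$ for $\varnothing$, $2$ for each singleton, $3,3,4$ for the pairs, $5$ for the full set) to get class sums $11$ and $10$.
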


\begin{proof}
By Lemma~\ref{lem:two-permutation-score}, we have for any set $T\subseteq \{a,b,c\}$ that $h(T)=|T|+\LCS(P_1|_T,P_2|_T)$. Evaluating all eight subsets of $\{a,b,c\}$ gives
\[
\begin{array}{c|cccccccc}
T & \varnothing & \{a\} & \{b\} & \{c\} & \{a,b\} & \{a,c\} & \{b,c\} & \{a,b,c\} \\ \hline
h(T) & 0 & 2 & 2 & 2 & 3 & 3 & 4 & 5
\end{array}
\]
Hence the odd-cardinality values sum to $2+2+2+5=11$, while the even-cardinality values sum to $0+3+3+4=10$.
\end{proof}

Lemma~\ref{lem:score-table} suggests the following way to encode a three-variable parity equation. Consider a bit vector $\mathbf f=(f_1,f_2,f_3)\in\{0,1\}^3$ and a target parity $\lambda\in\{0,1\}$. We would like to associate with $\mathbf f$ four subsets of $\{a,b,c\}$ such that these subsets are precisely the four odd-cardinality subsets when $\parity(\mathbf f)=\lambda$ (correspondingly, the parity equation is satisfied), and precisely the four even-cardinality subsets otherwise (the equation is not satisfied). The total contribution of the four copies of the local gadget would then be $11$ in the former case and $10$ in the latter. To obtain this behavior, define the affine parity class
\[
    R_\lambda:=\{\mathbf r=(r_1,r_2,r_3)\in\{0,1\}^3:\parity(\mathbf r)=1\oplus\lambda\}.
\]
It is easy to check that $|R_{\lambda}| = 4$ for either value of $\lambda$. For each $\mathbf r\in R_\lambda$, associate with $\mathbf f$ the subset of $\{a,b,c\}$ whose indicator vector is $\mathbf f\oplus\mathbf r$. Since
\[
    \parity(\mathbf f\oplus\mathbf r)=\parity(\mathbf f)\oplus\parity(\mathbf r)=\parity(\mathbf f)\oplus(1\oplus\lambda),
\]
all four resulting subsets have odd cardinality if $\parity(\mathbf f)=\lambda$, and even cardinality otherwise. As $\mathbf r$ ranges over $R_\lambda$, the indicator vectors $\mathbf f\oplus\mathbf r$ range over the entire odd-parity class when $\parity(\mathbf f)=\lambda$, and over the entire even-parity class when $\parity(\mathbf f)\neq\lambda$. 

By a slight abuse of notation, let $h(\mathbf u)$ denote $h(T_{\mathbf u})$ for the subset $T_{\mathbf u}\subseteq\{a,b,c\}$ having indicator vector $\mathbf u$. Lemma~\ref{lem:score-table} gives
\[
    \sum_{\mathbf r\in R_\lambda} h(\mathbf f\oplus\mathbf r)
    =
    \begin{cases}
        11, & \text{if }\parity(\mathbf f)=\lambda,\\
        10, & \text{if }\parity(\mathbf f)\neq\lambda.
    \end{cases}
\]

This symmetrization is one of the reasons for using the entire parity class $R_\lambda$. A single subset does not have a score determined solely by its parity; for example, $h(\{a\})=2$ whereas $h(\{a,b,c\})=5$, although both sets have odd cardinality. By translating the entire parity class $R_\lambda$, the construction averages away this dependence on the particular local assignment and retains only whether the parity equation is satisfied. Thus, each equation will contribute a fixed baseline of $10$, together with one additional unit exactly when it is satisfied.

The same choice of $R_\lambda$ has a second property that will be crucial for enforcing consistency across different occurrences of a variable. For every coordinate $t\in\{1,2,3\}$, exactly two vectors in $R_\lambda$ have $r_t=0$ and exactly two have $r_t=1$. Hence, when the four copies corresponding to $R_\lambda$ are introduced for an equation, each variable occurrence appears equally often with polarity $0$ and polarity $1$. This balance is crucial for the variable-consistency gadget.

\subsection{The reduction}
\label{sec:reduction}
Recall that an instance of MAX-E3-LIN-2 consists of a variable set $\mathcal X=\{x_1,\ldots,x_k\}$ over $\mathbb F_2$ and a system $\mathcal E=\{E_1,\ldots,E_q\}$ of $q$ linear equations, each with exactly three variables, where $$E_j:\quad x_{j,1}\oplus x_{j,2}\oplus x_{j,3}=\lambda_j$$
for $\lambda_j\in\{0,1\}$ and $x_{j,1},x_{j,2},x_{j,3}\in\mathcal X$. From now on, we refer to the instance by $\mathcal E$.
\paragraph{Equation gadget.} For each equation $E_j:x_{j,1}\oplus x_{j,2}\oplus x_{j,3}=\lambda_j$, define
\[
    R_j:=\bigl\{\mathbf r=(r_1,r_2,r_3)\in\{0,1\}^3:r_1\oplus r_2\oplus r_3=1\oplus\lambda_j\bigr\}.
\]
For every $\mathbf r\in R_j$, introduce the three-symbol alphabet $\Sigma_{j,\mathbf r}=\{a_{j,\mathbf r},b_{j,\mathbf r},c_{j,\mathbf r}\}$, whose symbols correspond, respectively, to the three variable occurrences $x_{j,1},x_{j,2},x_{j,3}$ in $E_j$. Let $\Sigma_j:=\bigcup_{\mathbf r\in R_j}\Sigma_{j,\mathbf r}$ and $\Sigma:=\bigcup_{j\in[q]}\Sigma_j$. Since $|R_j| = 4$ for every $j\in [q]$, we have $|\Sigma_j|=12$ and $|\Sigma|=12q$.

Let $\var: \Sigma \to \mathcal{X}$ be the function mapping each symbol to its corresponding variable occurrence. Specifically, $\var(a_{j,\mathbf{r}}) = x_{j,1}$, $\var(b_{j,\mathbf{r}}) = x_{j,2}$, and $\var(c_{j,\mathbf{r}}) = x_{j,3}$. Define the polarity function $\pol:\Sigma\to\{0,1\}$ by $\pol(a_{j,\mathbf r})=r_1$, $\pol(b_{j,\mathbf r})=r_2$, and $\pol(c_{j,\mathbf r})=r_3$.

For each $\Sigma_{j,\mathbf r}$, consider the two local permutations
\[
    P_1(\Sigma_{j,\mathbf r}):=a_{j,\mathbf r}b_{j,\mathbf r}c_{j,\mathbf r},\qquad P_2(\Sigma_{j,\mathbf r}):=b_{j,\mathbf r}c_{j,\mathbf r}a_{j,\mathbf r}.
\]
Fix an arbitrary order $(\mathbf r^{(1)},\ldots,\mathbf r^{(4)})$ of vectors of $R_j$ and define
\begin{align*}
    A_{1j}&:=P_1(\Sigma_{j,\mathbf r^{(1)}})P_1(\Sigma_{j,\mathbf r^{(2)}})P_1(\Sigma_{j,\mathbf r^{(3)}})P_1(\Sigma_{j,\mathbf r^{(4)}}),\\
    A_{2j}&:=P_2(\Sigma_{j,\mathbf r^{(1)}})P_2(\Sigma_{j,\mathbf r^{(2)}})P_2(\Sigma_{j,\mathbf r^{(3)}})P_2(\Sigma_{j,\mathbf r^{(4)}}).
\end{align*}
The two equation-gadget permutations are
\[
    \alpha_1:=A_{11}A_{12}\cdots A_{1q},\qquad \alpha_2:=A_{21}A_{22}\cdots A_{2q}.
\]

\paragraph{Variable-consistency gadget.} For a variable $x\in\mathcal X$ and a polarity $p\in\{0,1\}$, let
\[
    \Sigma_x^p:=\{\xi\in\Sigma:\var(\xi)=x,\ \pol(\xi)=p\}.
\]
Fix an arbitrary total order of the symbols in $\Sigma$, and from now on, consider this fixed ordering. Let $B_x^p$ denote the permutation of $\Sigma_x^p$ induced by this order. With $\mathcal X=\{x_1,\ldots,x_n\}$, define variable blocks $B_{x_i}^0B_{x_i}^1$ and $B_{x_i}^1B_{x_i}^0$ for each variable $x_i$ and define
\[
    \beta_1:=B_{x_1}^0B_{x_1}^1\cdots B_{x_k}^0B_{x_k}^1,\qquad \beta_2:=B_{x_1}^1B_{x_1}^0\cdots B_{x_k}^1B_{x_k}^0.
\]

For each coordinate $t\in\{1,2,3\}$, exactly two vectors in $R_j$ satisfy $r_t=0$, and exactly two satisfy $r_t=1$. Hence every variable occurrence contributes exactly two symbols of each polarity. Consequently, for every $x\in\mathcal X$,
\begin{equation}\label{eq:polarity-balance}
    |\Sigma_x^0|=|\Sigma_x^1|=:r_x,\qquad\text{and}\qquad \sum_{x\in\mathcal X}r_x=\frac{|\Sigma|}{2}=6q.
\end{equation}

\paragraph{Anchor block and input permutations.} Let $N:=|\Sigma|=12q$. Consider a permutation $Z$ over an alphabet $\Sigma_Z$ that is disjoint from $\Sigma$ and of size $M:=|\Sigma_Z|=2N+1$. We refer to $Z$ as an \emph{anchor block} and the symbols in the alphabet $\Sigma_Z$ as \emph{anchor symbols}.

Construct four input permutations as follows:
\[
    \pi_1:=\alpha_1Z,\qquad \pi_2:=\alpha_2Z,\qquad \sigma_1:=Z\beta_1,\qquad \sigma_2:=Z\beta_2.
\]
Each permutation has length $N+M=36q+1$. We denote the resulting instance by $$\Pi_{\mathcal E}:=\{\pi_1,\pi_2,\sigma_1,\sigma_2\}.$$

\subsection{Gadget scores}

We first establish a block-decomposition principle that we will use repeatedly.
\begin{lemma}[Common block order]\label{lem:block-decomposition}
Let $V=V_1\sqcup\cdots\sqcup V_t$, and for $\ell\in\{1,2\}$ let $Q_\ell=Q_{\ell,1}Q_{\ell,2}\cdots Q_{\ell,t}$, where $Q_{\ell,s}\in\mathcal P(V_s)$. Then, for every $S\subseteq V$,
\[
    \max_{\tau\in\mathcal P(S)}\sum_{\ell=1}^2\LCS(\tau,Q_\ell|_S)=\sum_{s=1}^t\max_{\tau_s\in\mathcal P(S\cap V_s)}\sum_{\ell=1}^2\LCS(\tau_s,Q_{\ell,s}|_{S\cap V_s}).
\]
\end{lemma}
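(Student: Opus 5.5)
The cleanest route is to reduce both sides to closed forms via Lemma~\ref{lem:two-permutation-score} and then compare LCS values directly. Applying Lemma~\ref{lem:two-permutation-score} over the alphabet $S$ with $\pi=Q_1|_S$ and $\sigma=Q_2|_S$, the left-hand side equals $|S|+\LCS(Q_1|_S,Q_2|_S)$. Likewise, for each $s$ the $s$-th summand on the right equals $|S\cap V_s|+\LCS(Q_{1,s}|_{S\cap V_s},Q_{2,s}|_{S\cap V_s})$. Since $S=\bigsqcup_s (S\cap V_s)$, the cardinality terms agree. The statement therefore reduces to the identity
\[
  \LCS(Q_1|_S,Q_2|_S)=\sum_{s=1}^t\LCS(Q_{1,s}|_{S\cap V_s},Q_{2,s}|_{S\cap V_s}).
\]
We also note that $Q_\ell|_S=(Q_{\ell,1}|_{S\cap V_1})\cdots(Q_{\ell,t}|_{S\cap V_t})$, because restriction commutes with concatenation.

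For the inequality ``$\geq$'', take an LCS of each pair of blocks and concatenate them in block order. Each block's common subsequence occupies the $s$-th segment of both $Q_1|_S$ and $Q_2|_S$, so the concatenation is a common subsequence of $Q_1|_S$ and $Q_2|_S$ whose length is the sum.

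For the inequality ``$\leq$'', take any common subsequence $w$ of $Q_1|_S$ and $Q_2|_S$, and write $w_s$ for the restriction of $w$ to $V_s$. Every symbol of $w$ lies in exactly one $V_s$. Moreover, $w_s$ is a subsequence of $Q_{1,s}|_{S\cap V_s}$ and of $Q_{2,s}|_{S\cap V_s}$, since restricting a subsequence of $Q_\ell|_S$ to $V_s$ yields a subsequence of its $V_s$-block. Hence $|w|=\sum_s|w_s|\le\sum_s\LCS(Q_{1,s}|_{S\cap V_s},Q_{2,s}|_{S\cap V_s})$. This holds even though the order of $w$ need not respect blocks a priori; and because the inputs are permutations, it in fact must.

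None of these steps is a real obstacle. The only point needing care is the ``$\le$'' direction, specifically the observation that restricting a common subsequence to $V_s$ stays a common subsequence of the restricted blocks. This holds because the symbols are distinct and restriction preserves relative order.
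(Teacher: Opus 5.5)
Your proof is correct, but it follows a different route from the paper's.

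You apply Lemma~\ref{lem:two-permutation-score} to both sides to remove the maximization over $\tau$ entirely. The statement then collapses to the single identity $\LCS(Q_1|_S,Q_2|_S)=\sum_s\LCS(Q_{1,s}|_{S\cap V_s},Q_{2,s}|_{S\cap V_s})$, and you prove it by concatenation for ``$\geq$'' and by restriction for ``$\leq$''.

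The paper never uses that closed form here and works with the candidate $\tau$ directly:
\begin{itemize}
\item it first proves the blockwise LCS identity for any two sequences sharing the block order;
\item it gets ``$\geq$'' by concatenating blockwise optimizers $\tau_s^*$;
\item it gets ``$\leq$'' by showing that stably sorting an arbitrary $\tau$ into the order $V_1,\ldots,V_t$ cannot decrease $\LCS(\tau,Q_\ell|_S)$.
\end{itemize}

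Your version is shorter and skips the stable-sorting step. However, it depends on there being exactly two inputs: the closed form $|S|+\LCS(\pi,\sigma)$ comes from the triangle inequality and has no analogue for three or more permutations. The paper's argument never uses the number of inputs, so it extends verbatim to $\sum_{\ell=1}^m$ over any number of inputs sharing the block order. That is exactly how the same stable-sorting idea is reused in the median-to-center reduction (Lemma~\ref{lem:median-to-center}) with $m$ inputs.

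A minor point on your aside in the ``$\leq$'' direction. The reason $w$ respects the block order is that it is a subsequence of the block-ordered sequence $Q_1|_S$, not that the inputs are permutations. In any case, your restriction argument does not need this fact: that direction holds for any partition, and only ``$\geq$'' uses the common block order.
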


\begin{proof}
For each $s\in[t]$, let $S_s:=S\cap V_s$.  Since the sets
$V_1,\ldots,V_t$ are pairwise disjoint and each $Q_\ell$ lists them in the
same block order, restriction to $S$ gives
\[
 Q_\ell|_S
   =(Q_{\ell,1}|_{S_1})(Q_{\ell,2}|_{S_2})\cdots
     (Q_{\ell,t}|_{S_t})
 \qquad\text{for }\ell\in\{1,2\}.
\]
Let us first observe a blockwise property of LCS that we use later in the proof.  Suppose that
$A=A_1A_2\cdots A_t$ and $B=B_1B_2\cdots B_t$, where $A_s$ and
$B_s$ use only symbols from $V_s$.  For each $s$, choose a longest common
subsequence of $A_s$ and $B_s$.  Concatenating these $t$ subsequences in
block order produces a common subsequence of $A$ and $B$.  Hence
\[
 \LCS(A,B)\geq\sum_{s=1}^t\LCS(A_s,B_s).
\]
Conversely, let $C$ be any common subsequence of $A$ and $B$.  Because both
$A$ and $B$ visit the blocks in the order $V_1,\ldots,V_t$, the symbols of
$C$ also visit these blocks in nondecreasing order.  Thus $C$ can be written
as $C=C_1C_2\cdots C_t$, where $C_s$ consists of the symbols of $C$ lying in
$V_s$.  The sequence $C_s$ is a common subsequence of $A_s$ and $B_s$, so
\[
 |C|=\sum_{s=1}^t|C_s|
     \leq\sum_{s=1}^t\LCS(A_s,B_s).
\]
Maximizing over $C$ proves the exact identity
\begin{equation}\label{eq:blockwise-lcs}
 \LCS(A,B)=\sum_{s=1}^t\LCS(A_s,B_s).
\end{equation}

We now argue the lower-bound direction of the claimed equality.  For each
$s\in[t]$, choose a permutation $\tau_s^*\in\mathcal P(S_s)$ that maximizes
\[
 \sum_{\ell=1}^2
   \LCS(\tau_s,Q_{\ell,s}|_{S_s})
 \quad\text{over }\tau_s\in\mathcal P(S_s),
\]
and concatenate these local optimizers in block order:
$\tau^*:=\tau_1^*\tau_2^*\cdots\tau_t^*$.  Applying
\eqref{eq:blockwise-lcs} separately for $\ell=1$ and $\ell=2$ gives
\[
\begin{aligned}
 \sum_{\ell=1}^2\LCS(\tau^*,Q_\ell|_S)
 &=\sum_{\ell=1}^2\sum_{s=1}^t
      \LCS(\tau_s^*,Q_{\ell,s}|_{S_s})\\
 &=\sum_{s=1}^t\max_{\tau_s\in\mathcal P(S_s)}
      \sum_{\ell=1}^2\LCS(\tau_s,Q_{\ell,s}|_{S_s}).
\end{aligned}
\]
Since $\tau^*\in\mathcal P(S)$, we derive that
\[
    \max_{\tau\in\mathcal P(S)}\sum_{\ell=1}^2\LCS(\tau,Q_\ell|_S) \ge \sum_{s=1}^t\max_{\tau_s\in\mathcal P(S_s)}\sum_{\ell=1}^2\LCS(\tau_s,Q_{\ell,s}|_{S_s}).
\]

For the reverse direction, fix an arbitrary $\tau\in\mathcal P(S)$.  Obtain
$\overline\tau$ by stably sorting the symbols of $\tau$ according to the
block order $V_1,\ldots,V_t$: symbols from $V_s$ precede symbols from
$V_{s+1}$, while the relative order of symbols belonging to the same block
is unchanged.  We claim that, for each $\ell\in\{1,2\}$,
\begin{equation}\label{eq:stable-sort-preserves-lcs}
 \LCS(\tau,Q_\ell|_S)
 \leq\LCS(\overline\tau,Q_\ell|_S).
\end{equation}
Indeed, take a longest common subsequence $C_\ell$ of $\tau$ and
$Q_\ell|_S$.  Since $Q_\ell|_S$ visits the blocks in the order
$V_1,\ldots,V_t$, the block indices of the symbols along $C_\ell$ are
nondecreasing.  Stable sorting places symbols from different blocks in
exactly this order and preserves their relative order within every block.
Consequently, $C_\ell$ is a subsequence of
$\overline\tau$, establishing
\eqref{eq:stable-sort-preserves-lcs}.

For $s\in[t]$, let $\overline\tau_s:=\tau|_{S_s}$, where the restriction
retains the order inherited from $\tau$.  By the definition of the stable
sort,
$\overline\tau=\overline\tau_1\overline\tau_2\cdots\overline\tau_t$.
So we get
\[
\begin{aligned}
 \sum_{\ell=1}^2\LCS(\tau,Q_\ell|_S)
 &\leq\sum_{\ell=1}^2\LCS(\overline\tau,Q_\ell|_S) &&\text{(by~\eqref{eq:stable-sort-preserves-lcs})}\\
 &=\sum_{s=1}^t\sum_{\ell=1}^2
      \LCS(\overline\tau_s,Q_{\ell,s}|_{S_s})&&\text{(by~\eqref{eq:blockwise-lcs})}\\
 &\leq\sum_{s=1}^t\max_{\tau_s\in\mathcal P(S_s)}
      \sum_{\ell=1}^2\LCS(\tau_s,Q_{\ell,s}|_{S_s}).
\end{aligned}
\]
This bound holds for every $\tau\in\mathcal P(S)$, and therefore also after
maximizing over $\tau$.  Together with the lower-bound direction, it proves the
lemma.  Note that if some $S_s$ is empty, the same argument applies with the unique
empty permutation and a zero LCS contribution for that block.
\end{proof}

For $S\subseteq\Sigma$, define
\[
    u(S):=\sum_{j\in[q]}\max_{\tau_j\in\mathcal P(S\cap\Sigma_j)}\bigl(\LCS(\tau_j,A_{1j}|_S)+\LCS(\tau_j,A_{2j}|_S)\bigr),
\]
and
\[
    v(S):=\sum_{x\in\mathcal X}\max\{|S\cap\Sigma_x^0|,|S\cap\Sigma_x^1|\}.
\]

\begin{lemma}\label{lem:equation-score}
For every $S\subseteq\Sigma$,
\begin{equation}
    \label{eq:score}
    \max_{\tau\in\mathcal P(S)}\bigl(\LCS(\tau,\alpha_1|_S)+\LCS(\tau,\alpha_2|_S)\bigr)=u(S).
\end{equation}

Moreover, if $F:\mathcal X\to\mathbb F_2$ is an assignment and $S^F:=\{\xi\in\Sigma:F(\var(\xi))\oplus\pol(\xi)=1\}$ then
\[
    u(S^F)=10q+\val_{\mathrm{LIN}}(\mathcal E, F).
\]
\end{lemma}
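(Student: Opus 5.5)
The proof splits into two parts, each reducing to results already established.

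\emph{First identity.} Apply Lemma~\ref{lem:block-decomposition} with $V=\Sigma$ partitioned as $\Sigma=\Sigma_1\sqcup\cdots\sqcup\Sigma_q$. Take $Q_1=\alpha_1$ and $Q_2=\alpha_2$, with blocks $Q_{\ell,j}=A_{\ell j}\in\mathcal P(\Sigma_j)$. The lemma's hypotheses hold because $\alpha_1$ and $\alpha_2$ list the equation blocks in the same order $j=1,\ldots,q$. The right-hand side of the lemma is then exactly $u(S)$. One small notational point: $A_{\ell j}|_S$ equals $A_{\ell j}|_{S\cap\Sigma_j}$, since $A_{\ell j}$ uses only symbols of $\Sigma_j$. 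This gives \eqref{eq:score}.

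\emph{Evaluation of $u(S^F)$.} Fix $j$ and apply Lemma~\ref{lem:block-decomposition} a second time, now inside $\Sigma_j=\bigsqcup_{\mathbf r\in R_j}\Sigma_{j,\mathbf r}$. Here $A_{1j}$ and $A_{2j}$ list the four sub-blocks in the same order $\mathbf r^{(1)},\ldots,\mathbf r^{(4)}$, with $P_1$ and $P_2$ playing the roles of the local blocks. The $j$-th summand of $u(S)$ therefore equals $\sum_{\mathbf r\in R_j} h\bigl(S\cap\Sigma_{j,\mathbf r}\bigr)$. Here $h$ is computed after identifying $a_{j,\mathbf r},b_{j,\mathbf r},c_{j,\mathbf r}$ with $a,b,c$; this identification is valid because $h(T)$ depends only on the restricted local orders $abc$ and $bca$.

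Now set $\mathbf f=(F(x_{j,1}),F(x_{j,2}),F(x_{j,3}))$. By the definition of $\pol$, the symbol $a_{j,\mathbf r}$ lies in $S^F$ iff $f_1\oplus r_1=1$, and similarly for $b_{j,\mathbf r}$ and $c_{j,\mathbf r}$. Hence $S^F\cap\Sigma_{j,\mathbf r}$ is the subset with indicator vector $\mathbf f\oplus\mathbf r$. The displayed consequence of Lemma~\ref{lem:score-table} then gives
\[
\sum_{\mathbf r\in R_j}h(\mathbf f\oplus\mathbf r)=10+[\parity(\mathbf f)=\lambda_j],
\]
and $\parity(\mathbf f)=\lambda_j$ is exactly the condition that $F$ satisfies $E_j$. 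Summing over $j\in[q]$ yields $u(S^F)=10q+\val_{\mathrm{LIN}}(\mathcal E,F)$.

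\emph{Expected difficulty.} There is no real obstacle here. The one point that needs care is the bookkeeping in the second application of Lemma~\ref{lem:block-decomposition}: the inner maximization over $\mathcal P(S\cap\Sigma_j)$ must split into the four independent local problems, each with value $h$. This holds because every $\Sigma_{j,\mathbf r}$ occupies a contiguous block, in the same relative position, in both $A_{1j}$ and $A_{2j}$.
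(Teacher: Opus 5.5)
Your proposal is correct and follows essentially the same route as the paper. You use one application of Lemma~\ref{lem:block-decomposition} over the equation blocks for \eqref{eq:score}, a second application over the four sub-blocks $\Sigma_{j,\mathbf r}$ with the relabeling to $\{a,b,c\}$ producing $h$, and then the translation $\mathbf r\mapsto\mathbf f\oplus\mathbf r$ onto a full parity class combined with Lemma~\ref{lem:score-table}; the only cosmetic difference is that you cite the displayed parity-class identity from Subsection~\ref{subsec:warmup} directly, whereas the paper re-derives it inside the proof.
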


\begin{proof}
To establish the identity~\eqref{eq:score}, apply Lemma~\ref{lem:block-decomposition} with
$V_j=\Sigma_j$ and $Q_{\ell,j}=A_{\ell j}$ for $\ell\in\{1,2\}$.
Indeed, the two permutations
$\alpha_\ell=A_{\ell 1}A_{\ell 2}\cdots A_{\ell q}$, for $\ell \in \{1,2\}$, list the disjoint
equation blocks $\Sigma_1,\ldots,\Sigma_q$ in the same order.  The lemma
therefore gives
\[
\begin{aligned}
 &\max_{\tau\in\mathcal P(S)}
   \bigl(\LCS(\tau,\alpha_1|_S)+\LCS(\tau,\alpha_2|_S)\bigr)\\
 &\qquad={}
   \sum_{j\in[q]}\max_{\tau_j\in\mathcal P(S\cap\Sigma_j)}
   \bigl(\LCS(\tau_j,A_{1j}|_S)+\LCS(\tau_j,A_{2j}|_S)\bigr),
\end{aligned}
\]
and the right-hand side is exactly $u(S)$.

We next refine the decomposition of $u(S)$ into the four copies belonging
to each equation; afterward we specialize to $S=S^F$. Consider the set $\{a,b,c\}$ of generic symbols introduced in Subsection~\ref{subsec:warmup}. For each $j\in[q]$
and $\mathbf r\in R_j$, let us consider the following bijection
\[
 \iota_{j,\mathbf r}:\Sigma_{j,\mathbf r}\longrightarrow\{a,b,c\}
\]
that maps $a_{j,\mathbf r},b_{j,\mathbf r},c_{j,\mathbf r}$
to $a,b,c$, respectively. By abusing notation, extend $\iota_{j,\mathbf r}$ symbolwise to
sequences over $\Sigma_{j,\mathbf r}$.  Fix $j$, and abbreviate
\[
 S_j:=S\cap\Sigma_j,
 \qquad
 S_{j,\mathbf r}:=S\cap\Sigma_{j,\mathbf r},
 \qquad
 T_{j,\mathbf r}:=\iota_{j,\mathbf r}(S_{j,\mathbf r})
 \subseteq\{a,b,c\}.
\]
Since $A_{\ell j}$ is a sequence of alphabet $\Sigma_j$, $A_{\ell j}|_S$ is the same as $A_{\ell j}|_{S_j}$.  Recall that the four vectors of $R_j$ were placed in
a fixed order
$(\mathbf r^{(1)},\ldots,\mathbf r^{(4)})$.  Restricting $A_{\ell j}$ to
$S_j$ preserves this common block order and gives, for $\ell\in\{1,2\}$,
\[
 A_{\ell j}|_{S_j}
 =P_\ell(\Sigma_{j,\mathbf r^{(1)}})|_{S_{j,\mathbf r^{(1)}}}
  \cdots
  P_\ell(\Sigma_{j,\mathbf r^{(4)}})|_{S_{j,\mathbf r^{(4)}}}.
\]
Lemma~\ref{lem:block-decomposition}, applied to these four disjoint
subblocks with ambient alphabet $\Sigma_j$ and subset $S_j$, therefore gives
\begin{equation}
\label{eq:equation-score-fixed-j-decomposition}
\begin{aligned}
 &\max_{\tau_j\in\mathcal P(S_j)}
   \sum_{\ell=1}^2\LCS(\tau_j,A_{\ell j}|_{S_j})=
   \sum_{\mathbf r\in R_j}
   \max_{\tau_{j,\mathbf r}\in\mathcal P(S_{j,\mathbf r})}
   \sum_{\ell=1}^2
   \LCS\bigl(\tau_{j,\mathbf r},
      P_\ell(\Sigma_{j,\mathbf r})|_{S_{j,\mathbf r}}\bigr).
\end{aligned}
\end{equation}

It remains to identify each summand on the right. Recall that $P_1=abc$ and $P_2=bca$. Observe that for every locally optimal permutation $\tau_{j,\mathbf r}$ and $\ell\in\{1,2\}$, 
\[
 \iota_{j,\mathbf r}\bigl(
   P_\ell(\Sigma_{j,\mathbf r})|_{S_{j,\mathbf r}}\bigr)
 =P_\ell|_{T_{j,\mathbf r}}.
\]
Moreover, a sequence $C$ is a common subsequence of
$\tau_{j,\mathbf r}$ and
$P_\ell(\Sigma_{j,\mathbf r})|_{S_{j,\mathbf r}}$ if and only if
$\iota_{j,\mathbf r}(C)$ is a common subsequence of
$\iota_{j,\mathbf r}(\tau_{j,\mathbf r})$ and
$P_\ell|_{T_{j,\mathbf r}}$.  This correspondence preserves sequence
lengths and is reversible using $\iota_{j,\mathbf r}^{-1}$; hence
\[
 \LCS\bigl(\tau_{j,\mathbf r},
    P_\ell(\Sigma_{j,\mathbf r})|_{S_{j,\mathbf r}}\bigr)
 =\LCS\bigl(\iota_{j,\mathbf r}(\tau_{j,\mathbf r}),
    P_\ell|_{T_{j,\mathbf r}}\bigr).
\]
Finally, $\tau_{j,\mathbf r}\mapsto
\iota_{j,\mathbf r}(\tau_{j,\mathbf r})$ is a bijection from the set of permutations
$\mathcal P(S_{j,\mathbf r})$ to $\mathcal P(T_{j,\mathbf r})$.
Consequently, the local maximum in
\eqref{eq:equation-score-fixed-j-decomposition} is
\[
 \max_{\sigma\in\mathcal P(T_{j,\mathbf r})}
 \bigl(\LCS(\sigma,P_1|_{T_{j,\mathbf r}})
      +\LCS(\sigma,P_2|_{T_{j,\mathbf r}})\bigr)
 =h(T_{j,\mathbf r}).
\]
Substituting this identity into
\eqref{eq:equation-score-fixed-j-decomposition} gives, for each fixed $j$,
\[
 \max_{\tau_j\in\mathcal P(S_j)}
 \sum_{\ell=1}^2\LCS(\tau_j,A_{\ell j}|_{S_j})
 =\sum_{\mathbf r\in R_j}
 h\bigl(\iota_{j,\mathbf r}(S\cap\Sigma_{j,\mathbf r})\bigr).
\]
Finally, summing this equality over $j\in[q]$ and using the definition of
$u(S)$ yields
\begin{equation}
\label{eq:equation-score-copy-decomposition}
 u(S)=\sum_{j\in[q]}\sum_{\mathbf r\in R_j}
 h\bigl(\iota_{j,\mathbf r}(S\cap\Sigma_{j,\mathbf r})\bigr).
\end{equation}

Next, fix an equation
$E_j:x_{j,1}\oplus x_{j,2}\oplus x_{j,3}=\lambda_j$, and let
\[
 \mathbf f_j:=\bigl(F(x_{j,1}),F(x_{j,2}),F(x_{j,3})\bigr).
\]
In $A_{\ell j}$, the block indexed by $\mathbf r=(r_1,r_2,r_3)$, the symbol corresponding
to $x_{j,t}$ belongs to $S^F$ precisely when
$F(x_{j,t})\oplus r_t=1$.  Thus the indicator vector of
$\iota_{j,\mathbf r}(S^F\cap\Sigma_{j,\mathbf r})$ is
$\mathbf f_j\oplus\mathbf r$.  Moreover, every $\mathbf r\in R_j$
satisfies $\parity(\mathbf r)=1\oplus\lambda_j$, so
\begin{equation}
\label{eq:equation-score-parity-translation}
 \parity(\mathbf f_j\oplus\mathbf r)
 =\parity(\mathbf f_j)\oplus(1\oplus\lambda_j).
\end{equation}
The translation $\mathbf r\mapsto\mathbf f_j\oplus\mathbf r$ is injective,
and both $R_j$ and each parity class in $\{0,1\}^3$ have four elements.
Consequently, its four images are the
entire parity class specified by
\eqref{eq:equation-score-parity-translation}; as illustrated
in Figure~\ref{fig:equation-score}.

\begin{figure}[H]
\centering
\begingroup
\setlength{\fboxsep}{3pt}
\begin{tikzpicture}[
  font=\small,>=Latex,
  flow/.style={-{Latex[length=2mm]},semithick,draw=black!65},
  globalbox/.style={draw=blue!45!black,rounded corners=3pt,
    fill=blue!5,align=center,text width=12.8cm,
    inner xsep=7pt,inner ysep=5pt},
  copybox/.style={draw=orange!65!black,rounded corners=3pt,
    fill=orange!9,align=center,text width=10.7cm,
    inner xsep=7pt,inner ysep=5pt},
  mapbox/.style={draw=black!55,rounded corners=3pt,
    fill=black!3,align=center,text width=11.3cm,
    inner xsep=7pt,inner ysep=5pt},
  goodbox/.style={draw=green!45!black,rounded corners=3pt,
    fill=green!7,align=center,text width=6.25cm,
    inner xsep=6pt,inner ysep=5pt},
  badbox/.style={draw=red!55!black,rounded corners=3pt,
    fill=red!6,align=center,text width=6.25cm,
    inner xsep=6pt,inner ysep=5pt}
]
  \node[globalbox] (global) at (0,3.55)
    {\textbf{Common equation-block order}\\[1.5mm]
     $\alpha_1|_S=A_{11}|_S\;\cdots\;
       \fcolorbox{orange!70!black}{yellow!18}{$A_{1j}|_S$}\;
       \cdots\;A_{1q}|_S$\\[1mm]
     $\alpha_2|_S=A_{21}|_S\;\cdots\;
       \fcolorbox{orange!70!black}{yellow!18}{$A_{2j}|_S$}\;
       \cdots\;A_{2q}|_S$};

  \node[copybox] (copies) at (0,1.45)
    {\textbf{Optimize equation $E_j$ independently}\\[1mm]
     $R_j=\{\mathbf r^{(1)},\mathbf r^{(2)},
                \mathbf r^{(3)},\mathbf r^{(4)}\}$};

  \node[mapbox] (map) at (0,-.15)
    {\textbf{Translate each copy by the local assignment $\mathbf f_j$}\\[1mm]
     $\iota_{j,\mathbf r}(S^F\cap\Sigma_{j,\mathbf r})
       =T_{\mathbf f_j\oplus\mathbf r}$
     for every $\mathbf r\in R_j$};

  \node[goodbox] (good) at (-3.55,-2.15)
    {\textbf{Satisfied: $F\models E_j$}\\[-.5mm]
     $\mathbf f_j\oplus R_j$ is the odd class\\[1mm]
     $\begin{array}{c|cccc|c}
       \mathbf u&100&010&001&111&\text{sum}\\ \hline
       h(\mathbf u)&2&2&2&5&\mathbf{11}
     \end{array}$};

  \node[badbox] (bad) at (3.55,-2.15)
    {\textbf{Violated: $F\not\models E_j$}\\[-.5mm]
     $\mathbf f_j\oplus R_j$ is the even class\\[1mm]
     $\begin{array}{c|cccc|c}
       \mathbf u&000&110&101&011&\text{sum}\\ \hline
       h(\mathbf u)&0&3&3&4&\mathbf{10}
     \end{array}$};

  \draw[flow] (global) -- (copies);
  \draw[flow] (copies) -- (map);
  \draw[flow] ($(map.south)+(-2.35,0)$) -- (good.north);
  \draw[flow] ($(map.south)+(2.35,0)$) -- (bad.north);
\end{tikzpicture}
\endgroup
\caption{The two-level decomposition used in
Lemma~\ref{lem:equation-score}.  The aligned input rows decompose by
equation.  For a fixed equation $E_j$, its four subblocks are indexed by
$R_j$, and translation by $\mathbf f_j$ maps them bijectively onto an entire
parity class, whose local scores sum to $11$ or $10$.}
\label{fig:equation-score}
\end{figure}

If $F$ satisfies $E_j$, then $\parity(\mathbf f_j)=\lambda_j$, and
thus $\parity(\mathbf f_j)\oplus(1\oplus\lambda_j) = 1$.  The four sets
appearing in the inner sum of
\eqref{eq:equation-score-copy-decomposition} are therefore exactly the four
odd-cardinality subsets of $\{a,b,c\}$, whose $h$-values sum to $11$ by
Lemma~\ref{lem:score-table}.  If $F$ violates $E_j$, then
$\parity(\mathbf f_j)=1\oplus\lambda_j$, and
thus $\parity(\mathbf f_j)\oplus(1\oplus\lambda_j) = 0$. Then the four sets are
exactly the four even-cardinality subsets, whose $h$-values sum to $10$ by
Lemma~\ref{lem:score-table}.

Finally, since exactly $\val_{\mathrm{LIN}}(\mathcal E, F)$ of the $q$ equations are
satisfied,
\eqref{eq:equation-score-copy-decomposition} gives
\[
 u(S^F)
 =11\val_{\mathrm{LIN}}(\mathcal E, F)
  +10\bigl(q-\val_{\mathrm{LIN}}(\mathcal E, F)\bigr)
 =10q+\val_{\mathrm{LIN}}(\mathcal E, F).
\]
\end{proof}

\begin{lemma}\label{lem:beta_lcs}
For every $S\subseteq\Sigma$,
\[
    \max_{\tau\in\mathcal P(S)}\bigl(\LCS(\tau,\beta_1|_S)+\LCS(\tau,\beta_2|_S)\bigr)=|S|+v(S).
\]
\end{lemma}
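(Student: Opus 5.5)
The plan is to mirror the proof of Lemma~\ref{lem:equation-score}: first decompose by variable via Lemma~\ref{lem:block-decomposition}, then evaluate each variable block through Lemma~\ref{lem:two-permutation-score}.

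\textbf{Decomposition by variable.} The permutations $\beta_1$ and $\beta_2$ list the disjoint sets $V_i:=\Sigma_{x_i}^0\sqcup\Sigma_{x_i}^1$, for $i=1,\ldots,k$, in the same block order. Within block $V_i$, $\beta_1$ contributes $Q_{1,i}=B_{x_i}^0B_{x_i}^1$ and $\beta_2$ contributes $Q_{2,i}=B_{x_i}^1B_{x_i}^0$. These sets cover $\Sigma$, since every symbol has a unique variable and a unique polarity. Lemma~\ref{lem:block-decomposition} therefore reduces the left-hand side to
\[
\sum_{x\in\mathcal X}\max_{\tau_x\in\mathcal P(S_x)}\bigl(\LCS(\tau_x,Q_{1,x}|_{S_x})+\LCS(\tau_x,Q_{2,x}|_{S_x})\bigr),
\qquad S_x:=S\cap(\Sigma_x^0\sqcup\Sigma_x^1).
\]

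\textbf{Evaluating one block.} By Lemma~\ref{lem:two-permutation-score}, applied with alphabet $S_x$, each summand equals $|S_x|+\LCS(Q_{1,x}|_{S_x},Q_{2,x}|_{S_x})$. Write $S_x^p:=S\cap\Sigma_x^p$ and let $\gamma_p:=B_x^p|_{S_x^p}$ for $p\in\{0,1\}$; both $\gamma_0$ and $\gamma_1$ are induced by the same fixed global order. Then $Q_{1,x}|_{S_x}=\gamma_0\gamma_1$ and $Q_{2,x}|_{S_x}=\gamma_1\gamma_0$, and I claim
\[
\LCS(\gamma_0\gamma_1,\gamma_1\gamma_0)=\max\{|S_x^0|,|S_x^1|\}.
\]
The lower bound is immediate, because $\gamma_0$ and $\gamma_1$ are each common subsequences. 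For the upper bound, suppose a common subsequence $C$ contains a symbol $y\in S_x^0$ and a symbol $z\in S_x^1$. In $\gamma_0\gamma_1$ the symbol $y$ precedes $z$, while in $\gamma_1\gamma_0$ the symbol $z$ precedes $y$, which is a contradiction. Hence $C$ lies entirely within one polarity class, and so $|C|\le\max\{|S_x^0|,|S_x^1|\}$.

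\textbf{Summing up.} Adding over $x$ and using $\sum_x|S_x|=|S|$ gives $|S|+\sum_x\max\{|S_x^0|,|S_x^1|\}=|S|+v(S)$, as claimed. The only step requiring care is the upper bound on the block LCS, that is, the observation that the two polarity blocks appear in opposite orders. Everything else follows directly from the earlier lemmas. Empty blocks are handled by the convention already noted in Lemma~\ref{lem:block-decomposition}.
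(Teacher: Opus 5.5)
Your proposal is correct and follows the paper's proof essentially step for step. It uses block decomposition over the variable blocks, then Lemma~\ref{lem:two-permutation-score} on each block, then the observation that the reversed order of the two polarity blocks forces any common subsequence into a single polarity class; your explicit lower-bound witness ($\gamma_0$ or $\gamma_1$ itself) just spells out a detail the paper leaves implicit.
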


\begin{proof}
Apply Lemma~\ref{lem:block-decomposition} to the common variable-block order $x_1,\ldots,x_k$. Fix a variable $x$, and abbreviate $S_x^p:=S\cap\Sigma_x^p$. Within the $x$-block, the two restricted permutations are
\[
    B_x^0|_S\,B_x^1|_S\qquad\text{and}\qquad B_x^1|_S\,B_x^0|_S.
\]
Since a common subsequence of these permutations cannot contain symbols from both polarity classes, their longest common subsequence has length $\max\{|S_x^0|,|S_x^1|\}$. Lemma~\ref{lem:two-permutation-score} therefore gives the local optimum
\[
    |S_x^0|+|S_x^1|+\max\{|S_x^0|,|S_x^1|\}.
\]
Summing over all variables proves the claim.
\end{proof}

\subsection{The anchor block}

Recall that $Z \in \mathcal P(\Sigma_Z)$ is the anchor block and the symbols in $\Sigma_Z$ are anchor symbols. Also, $\Sigma_Z$ is of size $M = 2N + 1$, where $N = |\Sigma|$ (the number of non-anchor symbols). Next, we argue that in our reduction, the anchor block ensures that the equation gadget and the variable-consistency gadget interact only through a partition of $\Sigma$.

\begin{lemma}\label{lem:anchor_block}
\[
    \max_{\tau\in\mathcal P(\Sigma\sqcup\Sigma_Z)}\sum_{\pi\in \Pi_{\mathcal E}}\LCS(\tau,\pi)=4M+\max_{(S_L,S_R):\Sigma=S_L\sqcup S_R}\bigl(u(S_L)+|S_R|+v(S_R)\bigr).
\]
\end{lemma}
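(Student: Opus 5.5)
We prove the two inequalities separately; the lower bound is a direct construction, and the upper bound is where the anchor size $M=2N+1$ does the work.

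\textbf{Lower bound.} Fix any partition $\Sigma=S_L\sqcup S_R$. Choose $\tau_L\in\mathcal P(S_L)$ attaining the maximum in \eqref{eq:score} of Lemma~\ref{lem:equation-score}, and $\tau_R\in\mathcal P(S_R)$ attaining the maximum in Lemma~\ref{lem:beta_lcs}. Set $\tau:=\tau_L Z\tau_R$. Then $\tau_L Z$ is a common subsequence of $\tau$ and $\pi_\ell$, so $\LCS(\tau,\pi_\ell)\ge \LCS(\tau_L,\alpha_\ell|_{S_L})+M$. Symmetrically, $Z\tau_R$ gives $\LCS(\tau,\sigma_\ell)\ge M+\LCS(\tau_R,\beta_\ell|_{S_R})$. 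Summing the four bounds gives $4M+u(S_L)+|S_R|+v(S_R)$.

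\textbf{Upper bound.} Fix an arbitrary $\tau$, and first reduce to a candidate that keeps the whole anchor in order. Let $C_1,C_2,D_1,D_2$ be longest common subsequences of $\tau$ with $\pi_1,\pi_2,\sigma_1,\sigma_2$. Each contains at most $M$ anchor symbols, and those anchor symbols appear in $Z$-order.

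We then normalize $\tau$ into the form $\tau'=\tau_L Z\tau_R$ without decreasing the total score. Replacing $\tau|_{\Sigma_Z}$ by $Z$ can only help every LCS's anchor part. Precisely, write $K=\LCS(\tau|_{\Sigma_Z},Z)$; each of the four LCSs uses at most $K$ anchors. I would argue via a cruder but sufficient counting.

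\emph{Case 1: $K\le M-1$.} Each LCS is at most $K+N$, so the total is at most $4(M-1)+4N$. Compare with the lower bound at $S_L=\varnothing$, which gives $4M+N+v(\Sigma)$ with $v(\Sigma)=6q=N/2$. We need $4M-4+4N\le 4M+\tfrac32 N$, which is false. So this crude count fails, and a sharper argument is needed; this is the main obstacle.

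\emph{Sharper argument.} Let $A$ be a longest common anchor chain of $\tau$ with $Z$, of length $K$. For $C_\ell$, count its anchor symbols (at most $K$) and its $\Sigma$-symbols. Define $S_L$ as the set of $\Sigma$-symbols of $\tau$ located before a chosen cut point, and $S_R$ as the rest, with the cut chosen relative to the anchors. The $\Sigma$-part of $C_\ell$ lies before all of its anchors. If the cut sits at the position in $\tau$ of the first anchor used, then the $\Sigma$-symbols of $C_\ell$ lie in $S_L$.

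Different LCSs may use different anchor subsets, however. To handle this, I would use the slack trick: discard the anchors of each LCS that are not in a common chain, losing at most $M-K$ per LCS. Then pick a single anchor $z$ that all four LCSs share and that lies in $Z$-order. If $K\ge$ roughly $M-N/2$, the four LCSs together miss at most $4(M-K)$ anchors, so a shared anchor exists. Cutting $\tau$ at $z$, the parts of $C_\ell$ before $z$ lie in $S_L$ and the parts of $D_\ell$ after $z$ lie in $S_R$. The $\Sigma$-symbols of $C_\ell$ are all before its anchors, hence mostly before $z$; any that lie after $z$ force $C_\ell$ to skip the anchors preceding $z$.

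This is where $M=2N+1$ enters: trading a $\Sigma$-symbol that crosses the cut against the anchors lost should be unprofitable. For the final step I would bound
\[
\sum_\ell \LCS(\tau,\pi_\ell)\le 2M+\max_{\tau_L}\sum_\ell \LCS(\tau_L,\alpha_\ell|_{S_L}),
\]
and likewise for the $\sigma_\ell$ with $S_R$. I would then conclude via \eqref{eq:score} and Lemma~\ref{lem:beta_lcs}.

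If $K$ is small, the total is at most $4K+4N\le 4M-4N-4+4N$, which is below the lower bound $4M$. So only large $K$ matters, and there the shared-anchor cut argument applies. Making the crossing-symbol trade-off rigorous is the step I expect to require the most care.
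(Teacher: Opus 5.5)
Your lower bound is correct, but the upper bound is not proved. It rests on one cut of $\tau$ (your shared anchor $z$) with the $\Sigma$-parts of the optimal alignments with $\pi_1,\pi_2$ on its left and those with $\sigma_1,\sigma_2$ on its right. Once such a cut exists, the rest of your plan works. The justification you give for its existence fails:
\begin{itemize}
\item Each LCS uses at most $K$ anchors, so it misses \emph{at least} $M-K$ of them, not at most.
\item An LCS with $\pi_\ell$ is only guaranteed to contain at least $K-N$ anchors, and these need not lie on your chain $A$. Four anchor sets of that size inside $\Sigma_Z$ need not share an element, so counting does not produce a shared anchor.
\item The intermediate range of $K$, between $N$ and roughly $M-N/2$, is covered by neither of your cases.
\end{itemize}
The ``crossing-symbol trade-off'' you flag is not actually an issue. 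If $C_\ell$ contains an anchor $z$, every $\Sigma$-symbol of $C_\ell$ already precedes $z$ in $\tau$, because all of $\alpha_\ell$ precedes $Z$ in $\pi_\ell$.

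The missing idea is the normalization you mention in one sentence and then drop. Keep the non-anchor symbols of $\tau$ in place and overwrite its anchor positions, from left to right, by $z_1,\dots,z_M$; call the result $\overline\tau$. Using $\LCS(\tau,\gamma Z)=\max_c\bigl(\LCS(\tau_{\le c}|_\Sigma,\gamma)+\LCS(\tau_{>c}|_{\Sigma_Z},Z)\bigr)$ and its mirror image for $Z\gamma$, no score decreases. A cut with $n_p$ anchors before it keeps its $\Sigma$-contribution and now receives the largest possible anchor contributions, $M-n_p$ and $n_p$.

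The shared anchor then comes from the choice $M=2N+1$, not from counting. For $\alpha_\ell Z$, any cut with $n_p\ge N+1$ is worth at most $N+M-(N+1)=M-1$, which is less than the value $\ge M$ of the cut before the first anchor. For $Z\beta_\ell$, any cut with $n_p\le N$ is worth at most $2N=M-1$. So let $S_L$ be the set of $\Sigma$-symbols preceding $z_{N+1}$ in $\overline\tau$. You get $\LCS(\overline\tau,\pi_\ell)\le M+\LCS(\overline\tau|_{S_L},\alpha_\ell|_{S_L})$ and $\LCS(\overline\tau,\sigma_\ell)\le M+\LCS(\overline\tau|_{S_R},\beta_\ell|_{S_R})$. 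Lemmas~\ref{lem:equation-score} and~\ref{lem:beta_lcs} then finish the argument.
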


\begin{proof}
We first prove the lower-bound direction. Fix a partition $(S_L, S_R)$ such that $\Sigma=S_L\sqcup S_R$, and choose $\tau_L\in\mathcal P(S_L)$ and $\tau_R\in\mathcal P(S_R)$ attaining the optima in Lemmas~\ref{lem:equation-score} and~\ref{lem:beta_lcs}, respectively. Consider a candidate permutation $\tau_LZ\tau_R$. 

Fix $i\in\{1,2\}$. Any common subsequence of $\tau_LZ\tau_R$ and $\alpha_iZ$ that contains an anchor symbol cannot contain a symbol of $\tau_R$, because every symbol of $\tau_R$ occurs after all anchors in the first permutation, whereas every non-anchor symbol occurs before all anchors in $\alpha_iZ$. Hence such a common subsequence has length at most $M+\LCS(\tau_L,\alpha_i|_{S_L})$,  which is attained by taking an LCS of $\tau_L$ and $\alpha_i|_{S_L}$ followed by all of $Z$. A common subsequence containing no anchor symbol has length at most $N<M$. Therefore,
\[
    \LCS(\tau_LZ\tau_R,\alpha_iZ)=M+\LCS(\tau_L,\alpha_i|_{S_L}).
\]
By a symmetric argument,
\[
    \LCS(\tau_LZ\tau_R,Z\beta_i)=M+\LCS(\tau_R,\beta_i|_{S_R}).
\]
Summing over the four permutations of our reduced instance $\Pi_{\mathcal E}$ and applying Lemmas~\ref{lem:equation-score} and~\ref{lem:beta_lcs} gives 
\[
\sum_{\pi\in \Pi_{\mathcal E}}\LCS(\tau_LZ\tau_R,\pi) = 4M+u(S_L)+|S_R|+v(S_R).
\]
Since the argument holds for any partition $(S_L,S_R)$, maximizing over the partition proves the lower bound
\[
    \max_{\tau\in\mathcal P(\Sigma\sqcup\Sigma_Z)}\sum_{\pi\in \Pi_{\mathcal E}}\LCS(\tau,\pi) \ge 4M+\max_{(S_L,S_R):\Sigma=S_L\sqcup S_R}\bigl(u(S_L)+|S_R|+v(S_R)\bigr).
\]

For the other direction, fix an arbitrary $\tau\in\mathcal P(\Sigma\sqcup\Sigma_Z)$. We first argue that it suffices to assume that the anchor symbols occur in $\tau$ in the order $Z$. Write $Z=z_1z_2\cdots z_M$, and let $\overline\tau$ be obtained from $\tau$ by leaving all non-anchor symbols $\Sigma$ in their original positions (as in $\tau$) while replacing the anchor symbols, from left to right, by $z_1,\ldots,z_M$. 

To see that this reordering of anchor symbols cannot decrease any of the four LCS values, consider an arbitrary permutation $\gamma$ of $\Sigma$. For any position $c$, let $\tau_{\leq c}$ and $\tau_{> c}$ denote the prefix and suffix of $\tau$ separated at the cut position $c$, respectively. Observe that if $c$ ranges over all cuts of $\tau$, including the two end cuts, then
\[
    \LCS(\tau,\gamma Z)=\max_c\left(\LCS(\tau_{\leq c}|_\Sigma,\gamma)+\LCS(\tau_{>c}|_{\Sigma_Z},Z)\right),
\]
and
\[
    \LCS(\tau,Z\gamma)=\max_c\left(\LCS(\tau_{\leq c}|_{\Sigma_Z},Z)+\LCS(\tau_{>c}|_\Sigma,\gamma)\right).
\]
For any cut position $c$ such that the prefix $\tau_{\leq c}$ contains $n_p$ anchor symbols, we have $\overline\tau_{>c}|_{\Sigma_Z}=z_{n_p+1}\cdots z_M$ and $\overline\tau_{\leq c}|_{\Sigma_Z}=z_1\cdots z_{n_p}$. Thus, the corresponding contributions of the anchor symbols in the LCS with $Z$ are exactly $M-n_p$ and $n_p$, respectively, which are also the largest possible values for a suffix and prefix containing those numbers of anchor symbols. Since the contribution of the non-anchor symbols in the LCS is unchanged, we have $\LCS(\overline\tau,\pi)\geq\LCS(\tau,\pi)$ for every $\pi\in \Pi_{\mathcal E}$.

It remains to bound the score of $\overline\tau$. Let $S_L$ be the set of symbols of $\Sigma$ occurring before the central anchor $z_{N+1}$ in $\overline\tau$, and let $S_R:=\Sigma\setminus S_L$. For a cut position $c$, let $n_p(c)$ denote the number of anchor symbols in the prefix $\overline \tau_{\leq c}$. For $i\in\{1,2\}$, we have
\[
    \LCS(\overline\tau,\alpha_iZ)=\max_c\left(\LCS(\overline\tau_{\leq c}|_\Sigma,\alpha_i)+M-n_p(c)\right).
\]
Any cut with $n_p(c)\geq N+1$ gives a value at most $N+M-(N+1)=M-1$, whereas the cut immediately before the first anchor gives a value at least $M$. Hence an optimizing cut satisfies $n_p(c)\leq N$. Its $\Sigma$-prefix is therefore a subsequence of $\overline\tau|_{S_L}$, so
\[
    \LCS(\overline\tau,\alpha_iZ)\leq M+\LCS(\overline\tau|_{S_L},\alpha_i|_{S_L}).
\]
Recall, for $i \in \{1,2\}$, $\pi_i = \alpha_i Z \in \Pi_{\mathcal E}$. Summing over $i$ and applying Lemma~\ref{lem:equation-score} gives
\begin{equation}
    \label{eq:upper-bound-anchor-block}
    \LCS(\overline\tau,\pi_1)+\LCS(\overline\tau,\pi_2)\leq 2M+u(S_L).
\end{equation}

Similarly,
\[
    \LCS(\overline\tau,Z\beta_i)=\max_c\left(n_p(c)+\LCS(\overline\tau_{>c}|_\Sigma,\beta_i)\right).
\] 
Any cut with $n_p(c)\leq N$ gives a value at most $N+N=2N=M-1$, whereas the cut immediately after the last anchor gives a value at least $M$. Hence an optimizing cut satisfies $n_p(c)\geq N+1$. Its $\Sigma$-suffix is therefore a subsequence of $\overline\tau|_{S_R}$, and consequently
\[
    \LCS(\overline\tau,Z\beta_i)\leq M+\LCS(\overline\tau|_{S_R},\beta_i|_{S_R}).
\]
Recall, for $i \in \{1,2\}$, $\sigma_i = Z \beta_i \in \Pi_{\mathcal E}$. Summing over $i$ and applying Lemma~\ref{lem:beta_lcs} gives
\begin{equation}
    \label{eq:lower-bound-anchor-block}
    \LCS(\overline\tau,\sigma_1)+\LCS(\overline\tau,\sigma_2)\leq 2M+|S_R|+v(S_R).
\end{equation}

Combining~\eqref{eq:upper-bound-anchor-block} and~\eqref{eq:lower-bound-anchor-block} yields
\[
    \sum_{\pi\in \Pi_{\mathcal E}}\LCS(\tau,\pi)\leq\sum_{\pi\in \Pi_{\mathcal E}}\LCS(\overline\tau,\pi)\leq 4M+u(S_L)+|S_R|+v(S_R).
\]
Maximizing over all $\tau\in\mathcal P(\Sigma\sqcup\Sigma_Z)$ shows the upper bound
\[
    \max_{\tau\in\mathcal P(\Sigma\sqcup\Sigma_Z)}\sum_{\pi\in \Pi_{\mathcal E}}\LCS(\tau,\pi) \le 4M+\max_{(S_L,S_R):\Sigma=S_L\sqcup S_R}\bigl(u(S_L)+|S_R|+v(S_R)\bigr)
\]
and that completes the proof of the lemma.
\end{proof}

\subsection{From partitions to assignments}

We will later use the following Lipschitz-type property of the function $h(\cdot)$ defined in Subsection~\ref{subsec:warmup}.
\begin{lemma}\label{lem:replacement}
For all $S,T\subseteq\{a,b,c\}$,
\[
    h(T)-h(S)\leq 2|T\setminus S|-|S\setminus T|.
\]
\end{lemma}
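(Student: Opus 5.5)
The plan is to reduce the inequality to single-element moves and then check those moves against the table of $h$ from Lemma~\ref{lem:score-table}. Since $h(T)=|T|+\LCS(P_1|_T,P_2|_T)$ by Lemma~\ref{lem:two-permutation-score}, it suffices to understand how $h$ changes when one symbol is added or removed.

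\textbf{Step 1: single-element moves.} For any $U\subseteq\{a,b,c\}$ and $y\notin U$, we show
\[
h(U)+1\le h(U\cup\{y\})\le h(U)+2.
\]
Adding $y$ raises $|U|$ by exactly one. It also raises the LCS term by either $0$ or $1$: restricting a common subsequence to $U$ loses at most one symbol, and any common subsequence for $U$ remains one for $U\cup\{y\}$. Consequently, deleting an element lowers $h$ by at least $1$ and at most $2$.

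\textbf{Step 2: chaining the moves.} Go from $S$ to $T$ in two stages. First delete the elements of $S\setminus T$ one at a time, reaching $S\cap T$. Each deletion lowers $h$ by at least $1$, so
\[
h(S\cap T)\le h(S)-|S\setminus T|.
\]
Then add the elements of $T\setminus S$ one at a time. Each addition raises $h$ by at most $2$, so
\[
h(T)\le h(S\cap T)+2|T\setminus S|.
\]
Combining the two bounds gives $h(T)-h(S)\le 2|T\setminus S|-|S\setminus T|$, as claimed.

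\textbf{Main obstacle and fallback.} The only delicate point is the monotonicity of the LCS term under adding one symbol, namely that it grows by $0$ or $1$. This is elementary, and it can also be confirmed directly against the table in Lemma~\ref{lem:score-table}. For example, going from $\{a,b\}$ to $\{a,b,c\}$ takes $h$ from $3$ to $5$ (a gain of $2$), and going from $\{a\}$ to $\{a,c\}$ takes $h$ from $2$ to $3$ (a gain of $1$). Since the ground set has only eight subsets, the $64$ ordered pairs $(S,T)$ could alternatively be checked exhaustively from the table as a fallback.
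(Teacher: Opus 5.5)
Your proposal is correct and matches the paper's proof: both use $h(U)=|U|+\LCS(P_1|_U,P_2|_U)$ to get $1\le h(U\cup\{y\})-h(U)\le 2$, then pass through $S\cap T$, using the lower bound for the elements of $S\setminus T$ and the upper bound for those of $T\setminus S$. The exhaustive table check you mention as a fallback is valid but not needed.
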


\begin{proof}
By Lemma~\ref{lem:two-permutation-score}, for any subset $U \subseteq \{a,b,c\}$,
\[
h(U)=|U|+\LCS(P_1|_U,P_2|_U).
\]
Adding one symbol to $U$ increases the cardinality term by one, while the LCS term cannot decrease and can increase by at most one. Hence, for each $x\in \{a,b,c\}\setminus U$,
\begin{equation}
    \label{eq:h-change}
    1\leq h(U\cup\{x\}) - h(U) \leq 2.
\end{equation}

\begin{align*}
    h(T) - h(S) & = h(T) - h(S\cap T) + h(S\cap T) - h(S)\\
    &= \left( h(T) - h(S\cap T)\right) - \left( h(S) - h(S\cap T)\right)\\
    &\le  2 |T\setminus S| - |S \setminus T|
\end{align*}
where the last inequality is derived by repeatedly adding symbols in $S \cap T$ one by one to obtain $T$ and applying the upper bound inequality of~\eqref{eq:h-change}, and by repeatedly adding symbols in $S \cap T$ one by one to obtain $S$ and applying the lower bound inequality of~\eqref{eq:h-change}.
\end{proof}

\begin{lemma}\label{lem:partition-score}
\[
    \max_{(S_L,S_R) : \Sigma=S_L\sqcup S_R}\bigl(|S_R|+u(S_L)+v(S_R)\bigr)=22q+\OPT_{\mathrm{LIN}}(\mathcal E).
\]
\end{lemma}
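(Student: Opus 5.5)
We prove the two inequalities separately. For the lower bound, fix an optimal assignment $F$ and take $S_L:=S^F=\{\xi:F(\var(\xi))\oplus\pol(\xi)=1\}$ and $S_R:=\Sigma\setminus S^F$. By Lemma~\ref{lem:equation-score}, $u(S_L)=10q+\OPT_{\mathrm{LIN}}(\mathcal E)$. For each variable $x$, the set $S_R$ contains exactly one full polarity class $\Sigma_x^{F(x)}$ and nothing from the other class. By \eqref{eq:polarity-balance}, this gives $|S_R|=\sum_x r_x=6q$ and $v(S_R)=\sum_x r_x=6q$. The total is therefore $22q+\OPT_{\mathrm{LIN}}(\mathcal E)$.

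For the upper bound, fix an arbitrary partition $(S_L,S_R)$ and decode an assignment by majority on the consistency side. For each $x$, let $F(x):=p$, where $p$ is a polarity maximizing $|S_R\cap\Sigma_x^p|$, so that $v(S_R)=\sum_x|S_R\cap\Sigma_x^{F(x)}|$. We compare $(S_L,S_R)$ with the canonical partition $(S^F,\Sigma\setminus S^F)$, whose score is $22q+\val_{\mathrm{LIN}}(\mathcal E,F)\le 22q+\OPT_{\mathrm{LIN}}(\mathcal E)$. It then suffices to show
\[
u(S_L)+|S_R|+v(S_R)\le u(S^F)+6q+6q.
\]
Using \eqref{eq:equation-score-copy-decomposition} and summing Lemma~\ref{lem:replacement} copy by copy, we get $u(S_L)-u(S^F)\le 2|S_L\setminus S^F|-|S^F\setminus S_L|$.

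Set $D:=\Sigma\setminus S^F=\bigsqcup_x\Sigma_x^{F(x)}$, the majority-polarity symbols. Write $g_x:=|S_R\cap\Sigma_x^{F(x)}|$ and $w_x:=|S_R\cap\Sigma_x^{1\oplus F(x)}|$, so that $w_x\le g_x\le r_x$. Then
\[
|S_L\setminus S^F|=|S_L\cap D|=\sum_x (r_x-g_x),\qquad |S^F\setminus S_L|=|S_R\cap S^F|=\sum_x w_x.
\]
We also have $|S_R|+v(S_R)=\sum_x(2g_x+w_x)$. Adding these gives
\[
u(S_L)+|S_R|+v(S_R)-u(S^F)\le\sum_x\bigl(2(r_x-g_x)-w_x+2g_x+w_x\bigr)=\sum_x 2r_x=12q,
\]
which is exactly the required bound. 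Note that the $w_x$ terms cancel exactly and the $g_x$ terms cancel. This is precisely why the constants $2$ and $1$ in Lemma~\ref{lem:replacement} match the weights $2$ (from $|S_R|$ plus $v$) and $1$ (from $|S_R|$ alone) on the consistency side.

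The main obstacle is this bookkeeping: we must make sure the majority decoding makes $v(S_R)$ count exactly the $F(x)$-polarity symbols, and we must identify $S^F$ with the minority-polarity classes, so that $\Sigma\setminus S^F=D$. That identification follows from the definition $F(\var(\xi))\oplus\pol(\xi)=1$, i.e.\ $\pol(\xi)\ne F(\var(\xi))$. One more point needs care: Lemma~\ref{lem:replacement} is applied to $T=\iota(S_L\cap\Sigma_{j,\mathbf r})$ and $S=\iota(S^F\cap\Sigma_{j,\mathbf r})$, and its differences sum to global set differences because the copies $\Sigma_{j,\mathbf r}$ partition $\Sigma$. Ties in the majority choice are harmless.
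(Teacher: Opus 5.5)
Your proof is correct and follows the paper's argument essentially step for step. The lower bound uses the same canonical partition $S^F$, and the upper bound uses the same majority decoding, the same copy-by-copy application of Lemma~\ref{lem:replacement}, and the same counting: your $\sum_x(r_x-g_x)$ and $\sum_x w_x$ are exactly the paper's $D=6q-v(S_R)$ and $U=|S_R|-v(S_R)$, just tracked per variable instead of in aggregate.
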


\begin{proof}
We first prove the lower-bound direction. Let $F:\mathcal X\to\mathbb F_2$ be an assignment satisfying exactly $\OPT_{\mathrm{LIN}}(\mathcal E)$ equations, and consider the following partition
\[
    S_L^F:=\{\xi\in\Sigma:F(\var(\xi))\oplus\pol(\xi)=1\},\qquad S_R^F:=\Sigma\setminus S_L^F.
\]
Recall that for a variable $x$ and polarity $p \in \{0,1\}$, $\Sigma_x^p=\{\xi\in\Sigma:\var(\xi)=x,\ \pol(\xi)=p\}$. Then observe that for $F(x)=p$, the symbols in $\Sigma_x^{1-p}$ lie in $S_L^F$, while those in $\Sigma_x^p$ lie in $S_R^F$. By~\eqref{eq:polarity-balance}, each of these sets has size $r_x$. Consequently, $|S_L^F|=|S_R^F|=6q$ and $v(S_R^F)=\sum_x r_x=6q$. Lemma~\ref{lem:equation-score} gives $u(S_L^F)=10q+\OPT_{\mathrm{LIN}}(\mathcal E)$. Therefore,
\[
    |S_R^F|+u(S_L^F)+v(S_R^F)=22q+\OPT_{\mathrm{LIN}}(\mathcal E).
\]
Thus clearly,
\[
    \max_{(S_L,S_R) : \Sigma=S_L\sqcup S_R}\bigl(|S_R|+u(S_L)+v(S_R)\bigr) \ge 22q+\OPT_{\mathrm{LIN}}(\mathcal E).
\]

For the other direction, fix an arbitrary partition $(S_L, S_R)$ such that $\Sigma=S_L\sqcup S_R$. For each $x\in\mathcal X$, choose a polarity $p_x\in\{0,1\}$ such that
\[
    |S_R\cap\Sigma_x^{p_x}|=\max\{|S_R\cap\Sigma_x^0|,|S_R\cap\Sigma_x^1|\},
\]
breaking ties arbitrarily, and then consider an assignment $F:\mathcal X\to\mathbb F_2$ that sets $F(x)=p_x$  for every $x \in \mathcal X$. Call a symbol $\xi$ \emph{true} if $F(\var(\xi))\oplus\pol(\xi)=1$, and \emph{false} otherwise. Let $S^F$ denote the set of all true symbols. Let $D$ be the number of false symbols in $S_L$, and let $U$ be the number of true symbols in $S_R$. For each variable $x$, its false symbols are precisely $\Sigma_x^{p_x}$, and by the definition of $p_x$, the number of false symbols in $S_R$ is exactly the contribution of variable $x$ to $v(S_R)$. Since there are $\sum_x r_x=6q$ false symbols in total, we have $D=6q-v(S_R)$ and $U=|S_R|-v(S_R)$. 

As in the proof of Lemma~\ref{lem:equation-score}, consider the set $\{a,b,c\}$ of generic symbols introduced in Subsection~\ref{subsec:warmup}. For each $j\in[q]$
and $\mathbf r\in R_j$, let us consider the following bijection
\[
 \iota_{j,\mathbf r}:\Sigma_{j,\mathbf r}\longrightarrow\{a,b,c\}
\]
that maps $a_{j,\mathbf r},b_{j,\mathbf r},c_{j,\mathbf r}$
to $a,b,c$, respectively. By abusing notation, extend $\iota_{j,\mathbf r}$ symbolwise to
sequences over $\Sigma_{j,\mathbf r}$.
For every $j\in[q]$ and $\mathbf r\in R_j$, define
\[
    C_{j,\mathbf r}:=\iota_{j,\mathbf r}(S^F\cap\Sigma_{j,\mathbf r}),\qquad T_{j,\mathbf r}:=\iota_{j,\mathbf r}(S_L\cap\Sigma_{j,\mathbf r}).
\]
Applying Lemma~\ref{lem:replacement} with $S=C_{j,\mathbf r}$ and $T=T_{j,\mathbf r}$ gives
\[
    h(T_{j,\mathbf r})\leq h(C_{j,\mathbf r})+2|T_{j,\mathbf r}\setminus C_{j,\mathbf r}|-|C_{j,\mathbf r}\setminus T_{j,\mathbf r}|.
\]
Summing over all $j$ and $\mathbf r$, the first set-difference term counts exactly the $D$ false symbols placed in $S_L$; more specifically,
\[
\sum_{j, \mathbf r} |T_{j,\mathbf r}\setminus C_{j,\mathbf r}| = D,
\]
while the second set-difference term counts exactly the $U$ true symbols placed in $S_R$, i.e.,
\[
\sum_{j, \mathbf r} |C_{j,\mathbf r}\setminus T_{j,\mathbf r}| = U.
\]
Together with the definition of $h(\cdot)$ and Lemma~\ref{lem:equation-score}, this yields
\[
\begin{aligned}
    u(S_L)&\leq u(S^F)+2D-U\\
    &=10q+\val_{\mathrm{LIN}}(\mathcal E, F)+2(6q-v(S_R))-(|S_R|-v(S_R)).
\end{aligned}
\]
Rearranging, we obtain
\[
    |S_R|+u(S_L)+v(S_R)\leq22q+\val_{\mathrm{LIN}}(\mathcal E, F)\leq22q+\OPT_{\mathrm{LIN}}(\mathcal E).
\]
Since the partition $(S_L, S_R)$ we consider is arbitrary, it follows that
\[
    \max_{(S_L,S_R) : \Sigma=S_L\sqcup S_R}\bigl(|S_R|+u(S_L)+v(S_R)\bigr) \le 22q+\OPT_{\mathrm{LIN}}(\mathcal E)
\]
and that concludes the proof.
\end{proof}

\subsection{Optimum value and hardness proofs}

We now relate the optimum value of the constructed Ulam median instance to that of the original MAX-E3-LIN-2 instance.

\begin{lemma}\label{lem:affine_opt}
For every MAX-E3-LIN-2 instance $\mathcal E$, the instance $\Pi_{\mathcal E}$ constructed in Subsection~\ref{sec:reduction} satisfies
\[
\OPT_{\mathrm{med}}(\Pi_{\mathcal E})=26q-\OPT_{\mathrm{LIN}}(\mathcal E).
\]
\end{lemma}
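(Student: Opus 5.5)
The plan is to convert the median cost into the total LCS score and then chain Lemma~\ref{lem:anchor_block} with Lemma~\ref{lem:partition-score}. Every input permutation in $\Pi_{\mathcal E}$ is a permutation of the common alphabet $\Sigma\sqcup\Sigma_Z$, which has size $N+M=36q+1$. For any candidate $\tau\in\mathcal P(\Sigma\sqcup\Sigma_Z)$, the definition $d_U(\tau,\pi)=(N+M)-\LCS(\tau,\pi)$ therefore gives
\[
\cost_{\Pi_{\mathcal E}}(\tau)=4(N+M)-\sum_{\pi\in\Pi_{\mathcal E}}\LCS(\tau,\pi).
\]
Minimizing the left-hand side over $\tau$ is the same as maximizing the total LCS score. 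Hence
\[
\OPT_{\mathrm{med}}(\Pi_{\mathcal E})=4(N+M)-\max_{\tau}\sum_{\pi\in\Pi_{\mathcal E}}\LCS(\tau,\pi).
\]

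Next, I substitute the value of this maximum. By Lemma~\ref{lem:anchor_block}, the maximum equals $4M+\max_{(S_L,S_R)}\bigl(u(S_L)+|S_R|+v(S_R)\bigr)$. By Lemma~\ref{lem:partition-score}, the inner maximum over partitions equals $22q+\OPT_{\mathrm{LIN}}(\mathcal E)$. The $4M$ terms cancel, and with $N=12q$ we get
\[
\OPT_{\mathrm{med}}(\Pi_{\mathcal E})=4N-22q-\OPT_{\mathrm{LIN}}(\mathcal E)=48q-22q-\OPT_{\mathrm{LIN}}(\mathcal E)=26q-\OPT_{\mathrm{LIN}}(\mathcal E).
\]

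All the substantive work is already contained in the two cited lemmas, so there is no real obstacle. The only care needed is bookkeeping: the distance is measured over the full alphabet including the anchor symbols, so the anchor contribution $4M$ cancels exactly, and $N=|\Sigma|=12q$ must be used correctly.
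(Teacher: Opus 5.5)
Your proposal is correct and follows essentially the same route as the paper. You rewrite the median cost as $4(N+M)$ minus the maximum total LCS score, then substitute Lemma~\ref{lem:anchor_block} and Lemma~\ref{lem:partition-score}, so the $4M$ terms cancel and $4N-22q=26q$.
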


\begin{proof}
Note every input permutation in $\Pi_{\mathcal E}$ has length $N+M$. Then, 
\begin{align*}
    \OPT_{\mathrm{med}}(\Pi_{\mathcal E})&=4(N+M)-\max_{\tau\in\mathcal P(\Sigma\sqcup\Sigma_Z)}\sum_{\pi\in \Pi_{\mathcal E}}\LCS(\tau,\pi) &&\text{(By definition)}\\
    &=4(N+M)-\bigl(4M+22q+\OPT_{\mathrm{LIN}}(\mathcal E)\bigr)&&\text{(By Lemma~\ref{lem:anchor_block} and~\ref{lem:partition-score})}\\
    &=26q-\OPT_{\mathrm{LIN}}(\mathcal E) && \text{(Since $N=12q$, $M=2N+1$)}.
\end{align*}
\end{proof}

We are now ready to complete the proof of~\autoref{thm:four-input-hardness}.
\begin{proof}[Proof of~\autoref{thm:four-input-hardness}]
For the sake of contradiction, assume that there exists a polynomial time $(51/50 - \varepsilon)$-approximation algorithm $\mathcal A$ for the Ulam median problem for some $0 < \varepsilon < 1/50$. Given a MAX-E3-LIN-2 instance $\mathcal E$ with $q$ equations, we first construct an Ulam median instance $\Pi_\mathcal E$ (as descfribed in Subsection~\ref{sec:reduction}) and then run the approximation algorithm $\mathcal A$ to obtain a solution $\rho$. Using the affine relation in \autoref{lem:affine_opt} and the approximation guarantee, we have, for every $0 < \delta < \varepsilon$,
    \begin{itemize}
        \item If $\OPT_\textup{LIN}(\mathcal E) \geq (1-\delta)q$, then $\cost_{\Pi_\mathcal E}(\rho) \leq (51/50 - \varepsilon)\OPT_\textup{med}(\Pi_\mathcal E) < (\frac{51}2 - \delta)q$, and
        \item If $\OPT_\textup{LIN}(\mathcal E) \leq (1/2 +\delta)q$, then $\cost_{\Pi_\mathcal E}(\rho) \geq \OPT_\textup{med}(\Pi_\mathcal E) \geq (\frac{51}2 - \delta)q$.
    \end{itemize}
    That is, a $(51/50-\varepsilon)$-approximation algorithm for the Ulam median problem for any $0 < \varepsilon < 1/50$ would distinguish MAX-E3-LIN-2 instances in which at least a $1-\delta$ fraction of the equations can be satisfied from those in which at most a $1/2+\delta$ fraction of equations can be satisfied for some $\delta > 0$, contradicting \autoref{thm:hastad_lin}.
\end{proof}

Next, we use the same reduction to refute the plausibility of attaining an additive approximation; more specifically, we show the following.
\MedianAdditiveHardness*
\begin{proof}
    Assume for contradiction that there exists a polynomial-time $\varepsilon mn$-additive approximation for Ulam median with some $0 < \varepsilon \leq 1/300$. Given a MAX-E3-LIN-2 instance $\mathcal E$ with $q$ equations, we first construct an Ulam median instance $\Pi_\mathcal E$ consisting of $m=4$ input permutations and then run the additive approximation algorithm to obtain a solution $\rho$ with length $n = 36q+1$. Using the affine relation in \autoref{lem:affine_opt} and the approximation guarantee, we have, for every $0 < \delta < \varepsilon$,
    \begin{itemize}
        \item If $\OPT_\textup{LIN}(\mathcal E) \geq (1-\delta)q$, then $\cost_{\Pi_\mathcal E}(\rho) \leq \OPT_\textup{med}(\Pi_\mathcal E) +  4\varepsilon(36q+1) <  (\frac{51}{2}-\varepsilon)q$,
        \item If $\OPT_\textup{LIN}(\mathcal E) \leq (1/2 +\delta)q$, then $\cost_{\Pi_\mathcal E}(\rho) \geq \OPT_\textup{med}(\Pi_\mathcal E) > (\frac{51}{2} - \varepsilon )q$.
    \end{itemize}
    This means that an $\varepsilon mn$-additive approximation for the Ulam median problem for any $0 < \varepsilon \leq 1/300$ would distinguish MAX-E3-LIN-2 instances in which at least a $1-\delta$ fraction of the equations can be satisfied from those in which at most a $1/2+\delta$ fraction of equations can be satisfied for some $0 < \delta < \varepsilon$, contradicting \autoref{thm:hastad_lin}.
\end{proof}

\section{Hardness of Approximation for Ulam Center}
\label{sec:center-hardness}

We now reduce Ulam median to Ulam center while preserving both the number of input permutations and the optimum value. Together with Theorem~\ref{thm:four-input-hardness}, this yields the claimed hardness of Ulam center.
\CenterHardnessTheorem* 

We first describe an approximation factor-preserving reduction from the Ulam median problem to the Ulam center problem.

\begin{lemma}[Median-to-center reduction]
\label{lem:median-to-center}
There is a polynomial-time algorithm that, given an instance $\Pi=\{\pi_j\}_{j=0}^{m-1}$ of Ulam median over an alphabet $\Sigma$ of size $n$, constructs an instance $\mathcal C(\Pi)=\{\sigma_j\}_{j=0}^{m-1}$ of Ulam center consisting of $m$ permutations over an alphabet $\Sigma'$ of size $mn$ such that $\OPT_{\mathrm{ctr}}(\mathcal C(\Pi))=\OPT_{\mathrm{med}}(\Pi)$. Moreover, from any center solution for $\mathcal C(\Pi)$ of radius at most $R$, one can compute in polynomial time a permutation $\rho\in\mathcal P(\Sigma)$ with $\cost_\Pi(\rho)\leq R$.
\end{lemma}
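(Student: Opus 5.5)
We follow the cyclic block construction sketched in the technical overview. Take $m$ disjoint copies $\Sigma^{(0)},\ldots,\Sigma^{(m-1)}$ of $\Sigma$, with bijections $\phi_s:\Sigma\to\Sigma^{(s)}$, and let $\Sigma':=\bigsqcup_s\Sigma^{(s)}$, so $|\Sigma'|=mn$. For $j\in\{0,\ldots,m-1\}$ define
\[
\sigma_j:=\phi_0(\pi_{j})\,\phi_1(\pi_{j+1})\cdots\phi_{m-1}(\pi_{j+m-1}),
\]
with indices taken mod $m$. Every $\sigma_j$ lists the blocks in the common order $\Sigma^{(0)},\ldots,\Sigma^{(m-1)}$. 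Block $s$ of $\sigma_j$ carries $\pi_{j+s}$, so for a fixed block $s$ the map $j\mapsto j+s$ is a bijection onto $\{0,\ldots,m-1\}$: each original permutation appears exactly once in each block across the instance. The construction is clearly polynomial.

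\textbf{Upper bound.} Let $\rho$ be an optimal median and set $\tau:=\phi_0(\rho)\phi_1(\rho)\cdots\phi_{m-1}(\rho)$. Since $\tau$ and $\sigma_j$ share the block order, the blockwise LCS identity \eqref{eq:blockwise-lcs} from the proof of Lemma~\ref{lem:block-decomposition} gives
\[
\LCS(\tau,\sigma_j)=\sum_s\LCS(\rho,\pi_{j+s})=\sum_i\LCS(\rho,\pi_i).
\]
Hence $d_U(\tau,\sigma_j)=mn-\sum_i\LCS(\rho,\pi_i)=\cost_\Pi(\rho)$ for every $j$, and so $\OPT_{\mathrm{ctr}}\le\OPT_{\mathrm{med}}$.

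\textbf{Extraction and lower bound.} Given any $\tau\in\mathcal P(\Sigma')$ with $\rad(\tau)\le R$, stably sort $\tau$ by block to obtain $\overline\tau=\overline\tau_0\cdots\overline\tau_{m-1}$. By the argument for \eqref{eq:stable-sort-preserves-lcs}, which only uses that the target visits the blocks in order, we get $\LCS(\overline\tau,\sigma_j)\ge\LCS(\tau,\sigma_j)$, so $\rad(\overline\tau)\le R$. Let $\rho_s:=\phi_s^{-1}(\overline\tau_s)\in\mathcal P(\Sigma)$. By \eqref{eq:blockwise-lcs}, $\LCS(\overline\tau,\sigma_j)=\sum_s\LCS(\rho_s,\pi_{j+s})$. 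Averaging the constraint $d_U(\overline\tau,\sigma_j)\le R$ over $j$, we obtain
\[
R\ge mn-\frac1m\sum_j\sum_s\LCS(\rho_s,\pi_{j+s})=mn-\frac1m\sum_s\sum_i\LCS(\rho_s,\pi_i)=\frac1m\sum_s\cost_\Pi(\rho_s),
\]
where the middle equality reindexes by the cyclic bijection $j\mapsto j+s$ for each fixed $s$. Therefore $\min_s\cost_\Pi(\rho_s)\le R$. The algorithm outputs the best $\rho_s$, computing each cost with $m$ LCS computations. Taking $\tau$ optimal yields $\OPT_{\mathrm{med}}\le\OPT_{\mathrm{ctr}}$, which gives equality.

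\textbf{Main obstacle.} The delicate step is the lower bound, specifically justifying that stable sorting does not decrease the LCS with any $\sigma_j$ and that the LCS then decomposes blockwise. Both facts are already established inside the proof of Lemma~\ref{lem:block-decomposition}, which was stated for a single general pair of sequences with a common block order, so we simply reinvoke them with $t=m$ blocks. The averaging step relies only on the cyclic (Latin-square) arrangement of the copies.
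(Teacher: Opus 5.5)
Your proposal is correct and follows essentially the same route as the paper: the cyclic block construction, stable sorting into the common block order, the blockwise LCS decomposition, averaging over $j$ using the cyclic bijection, and extracting the best $\rho_s$. The only differences are cosmetic (0-indexed blocks and explicit bijections $\phi_s$), and your averaging computation and the deduction of $\OPT_{\mathrm{med}}\le\OPT_{\mathrm{ctr}}$ from the extraction step are sound.
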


\begin{figure}[t]
\centering
\begin{tikzpicture}[
  font=\small,
  ctrbase/.style={draw,minimum width=1.25cm,minimum height=.50cm,
                  inner sep=1pt,font=\scriptsize},
  ctrone/.style={ctrbase,fill=blue!13},
  ctrtwo/.style={ctrbase,fill=orange!20},
  ctrthree/.style={ctrbase,fill=green!16},
  ctrfour/.style={ctrbase,fill=red!13},
  ctrcandidate/.style={ctrbase,fill=black!7},
  ctrnote/.style={draw,rounded corners,fill=black!3,align=left,
                  text width=5.2cm,inner sep=5pt,font=\small}
]
  \node[font=\scriptsize] at (0,.55) {$\Sigma^{(1)}$};
  \node[font=\scriptsize] at (1.35,.55) {$\Sigma^{(2)}$};
  \node[font=\scriptsize] at (2.70,.55) {$\Sigma^{(3)}$};
  \node[font=\scriptsize] at (4.05,.55) {$\Sigma^{(4)}$};

  \node[anchor=east,font=\scriptsize] at (-.75,0) {$\sigma_1$};
  \node[ctrone] at (0,0) {$\pi_1^{(1)}$};
  \node[ctrtwo] at (1.35,0) {$\pi_2^{(2)}$};
  \node[ctrthree] at (2.70,0) {$\pi_3^{(3)}$};
  \node[ctrfour] at (4.05,0) {$\pi_4^{(4)}$};

  \node[anchor=east,font=\scriptsize] at (-.75,-.62) {$\sigma_2$};
  \node[ctrtwo] at (0,-.62) {$\pi_2^{(1)}$};
  \node[ctrthree] at (1.35,-.62) {$\pi_3^{(2)}$};
  \node[ctrfour] at (2.70,-.62) {$\pi_4^{(3)}$};
  \node[ctrone] at (4.05,-.62) {$\pi_1^{(4)}$};

  \node[anchor=east,font=\scriptsize] at (-.75,-1.24) {$\sigma_3$};
  \node[ctrthree] at (0,-1.24) {$\pi_3^{(1)}$};
  \node[ctrfour] at (1.35,-1.24) {$\pi_4^{(2)}$};
  \node[ctrone] at (2.70,-1.24) {$\pi_1^{(3)}$};
  \node[ctrtwo] at (4.05,-1.24) {$\pi_2^{(4)}$};

  \node[anchor=east,font=\scriptsize] at (-.75,-1.86) {$\sigma_4$};
  \node[ctrfour] at (0,-1.86) {$\pi_4^{(1)}$};
  \node[ctrone] at (1.35,-1.86) {$\pi_1^{(2)}$};
  \node[ctrtwo] at (2.70,-1.86) {$\pi_2^{(3)}$};
  \node[ctrthree] at (4.05,-1.86) {$\pi_3^{(4)}$};

  \draw[dashed,black!55] (-.65,-2.35)--(4.70,-2.35);
  \node[anchor=east,font=\scriptsize] at (-.75,-2.82) {$\bar\tau$};
  \node[ctrcandidate] at (0,-2.82) {$\rho_1^{(1)}$};
  \node[ctrcandidate] at (1.35,-2.82) {$\rho_2^{(2)}$};
  \node[ctrcandidate] at (2.70,-2.82) {$\rho_3^{(3)}$};
  \node[ctrcandidate] at (4.05,-2.82) {$\rho_4^{(4)}$};
\end{tikzpicture}
\caption{The median-to-center reduction for $m=4$.  Cyclic shifts
place every original input once in every row and column, preserving the
optimum.}
\label{fig:median-to-center-overview}
\end{figure}

\begin{proof}
For each $i\in[m]$, let $\Sigma^{(i)}$ be a disjoint copy of $\Sigma$. For a permutation $\pi\in\mathcal P(\Sigma)$, let $\pi^{(i)}$ denote its copy on $\Sigma^{(i)}$. The center instance is defined over the alphabet $\bigcup_{i\in[m]}\Sigma^{(i)}$ as follows. For each $j\in \{0,1,\ldots,m-1\}$, define
\[
    \sigma_j:=\pi_j^{(1)}\pi_{j+_m 1}^{(2)}\cdots\pi_{j+_m (m-1)}^{(m)},
\]
where the notation $+_m$ denotes the sum modulo $m$. Thus, every $\sigma_j$ contains one copy of every input permutation, and these copies are cyclically shifted across the $m$ blocks. The construction is illustrated in \autoref{fig:median-to-center-overview}.

We first show that a center candidate can be assumed to respect the common block order. Let $\tau$ be an arbitrary permutation of $\bigcup_{i\in[m]}\Sigma^{(i)}$, and let $\overline\tau$ be obtained by stably sorting the symbols of $\tau$ according to the block order $\Sigma^{(1)},\Sigma^{(2)},\ldots,\Sigma^{(m)}$, while preserving the relative order within each block. Every $\sigma_j$ has this same block order. Hence every common subsequence of $\tau$ and $\sigma_j$ already visits the blocks in this order, and stable sorting preserves such a subsequence. Therefore, for every $j\in\{0,1,\ldots,m-1\}$,
\[
    d_U(\overline\tau,\sigma_j)\leq d_U(\tau,\sigma_j).
\]
For each $i \in [m]$, let $\rho_i \in \mathcal{P}(\Sigma)$ denote the restriction of $\overline{\tau}$ to the $i$-th alphabet copy $\Sigma^{(i)}$, where the symbols are replaced by their counterparts in the original alphabet $\Sigma$. Since $\overline\tau$ and $\sigma_j$ have the same block order, their LCS decomposes across the blocks, and therefore
\[
    d_U(\overline\tau,\sigma_j)=\sum_{i\in[m]}d_U(\rho_i,\pi_{j+_m (i-1)}).
\]
Averaging over $j\in[m]$ gives
\begin{align*}
    \frac{1}{m}\sum_{j = 0}^{m-1}d_U(\overline\tau,\sigma_j)&=\frac{1}{m}\sum_{j = 0}^{m-1}\sum_{i\in[m]}d_U(\rho_i,\pi_{j+_m (i-1)})\\
    &=\frac{1}{m}\sum_{i\in[m]}\cost_\Pi(\rho_i)\\
    &\geq\OPT_{\mathrm{med}}(\Pi).
\end{align*}
The second equality holds because, for every fixed $i$, the (cyclic) index $j+_m (i-1)$ ranges over $\{0,1,\ldots,m-1\}$ exactly once as $j$ ranges over $\{0,1,\ldots,m-1\}$. Since
\[
\max_j d_U(\tau,\sigma_j)\geq\max_j d_U(\overline\tau,\sigma_j)\geq \frac{1}{m}\sum_j d_U(\overline\tau,\sigma_j),
\]
every center candidate $\tau$ has radius at least $\OPT_{\mathrm{med}}(\Pi)$. Hence $\OPT_{\mathrm{ctr}}(\mathcal C(\Pi))\geq\OPT_{\mathrm{med}}(\Pi).$

For the reverse inequality, let $\rho^*$ be an optimal median for $\Pi$, and consider
\[
    \tau^*:=(\rho^*)^{(1)}(\rho^*)^{(2)}\cdots(\rho^*)^{(m)}.
\]
For every $j\in \{0,1,\ldots,m-1\}$, the block decomposition gives
\[
    d_U(\tau^*,\sigma_j)=\sum_{i\in[m]}d_U(\rho^*,\pi_{j+_m (i-1)})=\sum_{\ell = 0}^{m-1}d_U(\rho^*,\pi_\ell)=\OPT_{\mathrm{med}}(\Pi).
\]
Thus, $\tau^*$ has radius $\OPT_{\mathrm{med}}(\Pi)$, and therefore $\OPT_{\mathrm{ctr}}(\mathcal C(\Pi))\leq\OPT_{\mathrm{med}}(\Pi)$. Together with the previous lower bound, this proves equality of the optimum values.

Finally, suppose that a center solution $\tau$ has radius at most $R$. Construct $\overline\tau$ and $\rho_1,\ldots,\rho_m$ as above. Since stable sorting does not increase any distance, $d_U(\overline\tau,\sigma_j)\leq R$ for every $j$. Therefore,
\[
    \frac{1}{m}\sum_{i\in[m]}\cost_\Pi(\rho_i)=\frac{1}{m}\sum_{j= 0}^{m-1}d_U(\overline\tau,\sigma_j)\leq R.
\]
Hence at least one $i\in[m]$ satisfies $\cost_\Pi(\rho_i)\leq R$. Computing all $\rho_i$ and returning one of minimum cost proves the extraction claim.
\end{proof}

\begin{proof}[Proof of \autoref{thm:center-hardness}]
Suppose that, for some $\varepsilon>0$, there is a polynomial-time $(51/50-\varepsilon)$-approximation algorithm for Ulam center with four input permutations. Given any four-input Ulam median instance $\Pi$, apply Lemma~\ref{lem:median-to-center} to construct the four-input center instance $\mathcal C(\Pi)$. Since the construction preserves the optimum value, we can compute a solution $\rho$ for the Ulam median with
\[
\cost_\Pi(\rho)\leq (51/50-\varepsilon)\OPT_{\mathrm{ctr}}(\mathcal C(\Pi)) = (51/50-\varepsilon)\OPT_{\mathrm{med}}(\Pi),
\]
contradicting~\autoref{thm:four-input-hardness}.
\end{proof}

Next, we use the same reduction to refute the plausibility of attaining an additive approximation; more specifically, we show the following.

\CenterAdditiveHardness*

\begin{proof}
    Assume for contradiction that there exists a polynomial-time $\varepsilon n$-additive approximation algorithm for the Ulam center problem with some $0 < \varepsilon \leq 1/300$. Given an Ulam median instance $\Pi$ consisting of four input permutations, each of length $\ell$, we first construct an Ulam center instance $\mathcal{C}(\Pi)$ consisting of $m=4$ permutations, each of length $n=4\ell$. We then run the additive approximation algorithm to obtain a solution $\tau$. By \autoref{lem:median-to-center}, from $\tau$, we can compute in polynomial time a solution $\rho$ for the Ulam median problem of the instance $\Pi$,  satisfying
    \[
    \mathrm{cost}_{\Pi}(\rho) \leq \mathrm{rad}_{\mathcal{C}(\Pi)}(\tau) \leq \mathrm{OPT}_{\mathrm{ctr}}(\mathcal{C}(\Pi)) + \frac{1}{300}\cdot 4\ell = \mathrm{OPT}_{\mathrm{med}}(\Pi) + \frac{1}{300}\cdot 4\ell,
    \]
    contradicting~\autoref{thm:median-additive-hardness}.
\end{proof}

\section{Conclusion}\label{sec:conclusion}

We established explicit constant-factor inapproximability for rank
aggregation under the Ulam metric.  In particular, unless $\mathrm{P}=
\mathrm{NP}$, neither Ulam median nor Ulam center admits a PTAS, even when
the instance consists of only four input permutations.  For the median
objective, the reduction combines an affine parity gadget with reversed
polarity blocks and a long anchor, yielding an exact affine relation between
the median cost and the optimum of MAX-E3-LIN-2.  For the center objective,
the cyclic block construction preserves the optimum value and permits the
lossless extraction of a median solution from any center solution.  The same
linear gap also rules out polynomial-time additive approximation schemes.

A natural next step is to narrow the substantial gap between these lower
bounds and the best known approximation algorithms.  It would be especially
interesting to obtain stronger hardness of approximation even for an arbitrary number of
inputs, or to identify additional structure that can be exploited
algorithmically.  For Ulam center, determining whether one can beat factor
$2$ when the number of inputs is part of the input remains another compelling
open direction.

\paragraph*{AI Disclosure. }The core technical idea underlying this paper was initially developed with the assistance of ChatGPT 5.6 Sol. The authors carefully examined, refined, and independently verified the resulting ideas and arguments, and rewrote to provide their own motivation, exposition, and proof structure. The final paper reflects the authors’ own understanding of the results, and the authors take full responsibility for all mathematical claims, proofs,  and references.

\bibliographystyle{plain}
\bibliography{ref}

\end{document}